\documentclass[12pt,letterpaper]{article}

\usepackage{fixltx2e}
\usepackage{textcomp}
\usepackage{fullpage}
\usepackage{amsfonts}
\usepackage{verbatim}
\usepackage[english]{babel}
\usepackage{pifont}
\usepackage{color}
\usepackage{setspace}
\usepackage{lscape}
\usepackage{lscape}
\usepackage{indentfirst}
\usepackage[normalem]{ulem}
\usepackage{booktabs}
\usepackage{natbib}
\usepackage{float}
\usepackage{latexsym}
\usepackage{url}
\usepackage{hyperref}
\usepackage{epsfig}
\usepackage{graphicx}
\usepackage{amssymb}
\usepackage{amsmath}
\usepackage{bm}
\usepackage{array}
\usepackage{ifthen}
\usepackage{caption}
\usepackage{hyperref}
\usepackage{amsthm}
\usepackage{amstext}


\usepackage{epic}

\usepackage{latexsym}
\usepackage{amsmath}
\usepackage{epsfig}
\usepackage{amssymb}
\usepackage{enumerate}

\usepackage[mathscr]{euscript}

\newtheorem{theorem}{Theorem}
 
\newtheorem{corollary}[theorem]{Corollary}
\newtheorem{proposition}[theorem]{Proposition}

\newcommand{\PP}{{\mathbb P}}

\newcommand{\cQ}{{\mathcal Q}}

\newcommand{\cS}{{\mathcal S}}
\newcommand{\cL}{{\mathcal L}}
\newcommand{\cT}{{\mathcal T}}
\newcommand{\cP}{{\mathcal P}}

\newcommand{\cM}{{\mathcal M}}

\linespread{1.66}
\raggedright
\setlength{\parindent}{0.5in}

\setcounter{secnumdepth}{0}



\renewcommand{\section}[1]{%
\bigskip
\begin{center}
\begin{Large}
\normalfont\scshape #1
\medskip
\end{Large}
\end{center}}

\renewcommand{\subsection}[1]{%
\bigskip
\begin{center}
\begin{large}
\normalfont\itshape #1
\end{large}
\end{center}}

\renewcommand{\subsubsection}[1]{%
\vspace{2ex}
\noindent
\textit{#1.}---}

\renewcommand{\tableofcontents}{}

\bibpunct{(}{)}{;}{a}{}{,}  

\begin{document}
\begin{flushright}
Version dated: \today
\end{flushright}
\bigskip
\noindent RH:  IMPACTS OF TERRACES

\bigskip
\medskip
\begin{center}

\noindent{\Large \bf Impacts of terraces on phylogenetic inference}
\bigskip



\noindent {\normalsize \sc Michael J. Sanderson$^{1,6}$, Michelle M. McMahon$^2$, Alexandros Stamatakis$^{1,3,4}$, Derrick J. Zwickl$^1$,  and Mike Steel$^{5,6}$, }\\
\noindent {\small \it 
$^1$Department of Ecology and Evolutionary Biology, University of Arizona, Tucson, AZ, 85721, USA;\\
$^2$School of Plant Sciences, University of Arizona, Tucson, AZ, 85721, USA;\\
$^3$Scientific Computing Group, Heidelberg Institute for Theoretical Studies, Heidelberg, 69118, Germany;
$^4$Institute of Theoretical Informatics, Karlsruhe Institute of Technology, Karlsruhe, 76131, Germany;
$^5$Biomathematics Research Centre, University of Canterbury, Christchurch, NZ;\\
$^6$Authors contributed equally to work}\\
\end{center}
\medskip
\noindent{\bf Corresponding author:} Michael J. Sanderson, Department of Ecology and Evolutionary Biology, University of Arizona, Tucson, AZ, 85721, USA; E-mail: sanderm@email.arizona.edu.\\


\vspace{1in}

\newpage
\subsubsection{Abstract} Terraces are potentially large sets of trees with precisely the same likelihood or parsimony score, which can be induced by missing sequences in partitioned multi-locus phylogenetic data matrices. The set of trees on a terrace can be characterized by enumeration algorithms or consensus methods that exploit the pattern of partial taxon coverage in the data, independent of the sequence data themselves. Terraces add ambiguity and complexity to phylogenetic inference particularly in settings where inference is already challenging: data sets with many taxa and relatively few loci. In this paper we present five new findings about terraces and their impacts on phylogenetic inference. First we clarify assumptions about model parameters that are necessary for the existence of terraces. Second, we explore the dependence of terrace size on partitioning scheme and indicate how to find the partitioning scheme associated with the largest terrace containing a given tree. Third, we highlight the impact of terraces on bootstrap estimates of confidence limits in clades, and characterize the surprising result that the bootstrap proportion for a clade can be entirely determined by the frequency of bipartitions on a terrace, with some bipartitions receiving high support even when incorrect. Fourth, we dissect some effects of prior distributions of edge lengths on the computed posterior probabilities of clades on terraces, to understand an example in which long edges ``attract" each other in Bayesian inference. Fifth, we show that even if data are not partitioned, patterns of missing data studied in the terrace problem can lead to instances of apparent statistical inconsistency when even a small element of heterotachy is introduced to the model generating the sequence data. Finally, we discuss strategies for remediation of some of these problems. Among the most promising is the strategic deletion of a minimal number of taxa from the data in order to reduce terrace sizes.

\noindent (Keywords: phylogenetics, terrace, partitioned model, bootstrap, posterior probability)\\


\vspace{1.5in}

\newpage

Phylogenetic trees with thousands to tens of thousands of species are becoming increasingly commonplace \citep{rab13,zan14}. They can serve at least two purposes: quantifying and conveying the true scale and breadth of biodiversity, and providing statistical power to distinguish between alternative models of evolution \citep{wie11a,boe12,cha12,gol12,mar12,chr13,dav13}. Reconstruction of large trees entails many challenges \citep{san07,izq11,liu12}, including a recently discovered one: ``terraces" \citep{san11}. A terrace is a region in tree space in which all trees have precisely the same likelihood and parsimony score, which adds ambiguity to the ``landscape" of trees (Fig.~\ref{fig1}) and complexity to tree inference. Terraces may have been overlooked among the inevitable small numerical differences that arise in computing the likelihood score on different trees, especially in large data sets. Indeed, the issue of whether two trees have {\em exactly} the same likelihood rarely arises in modern phylogenetic inference for this reason. 

	\begin{figure}[htb]
\centering
\includegraphics[scale=0.66]{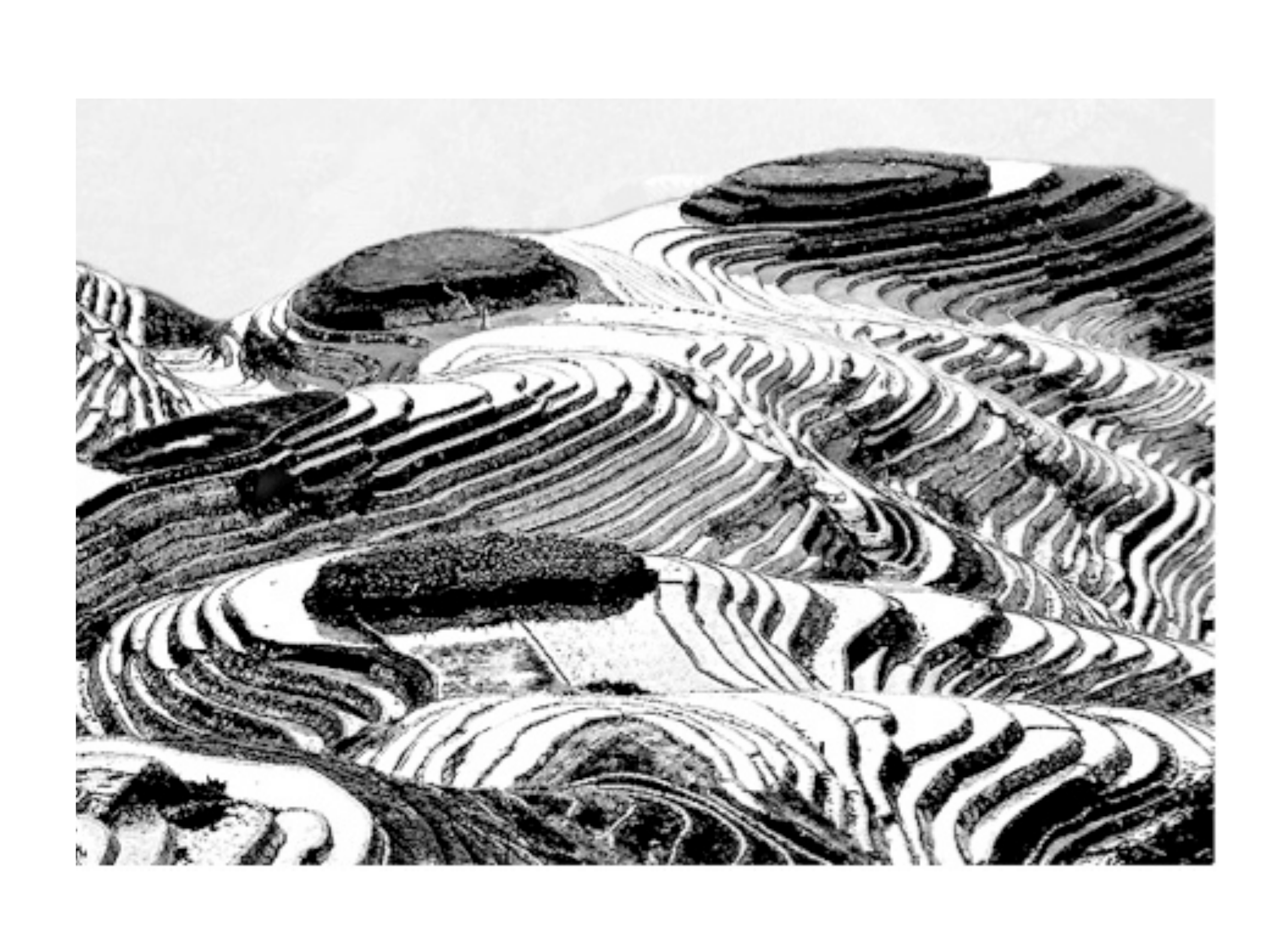}
\caption{Schematic view of terraces in a likelihood surface of phylogenetic trees.}
\label{fig1}
\end{figure}

In trying to improve the efficiency of likelihood calculations in RAxML, \citet{StamAlAch10} pointed out one important context in which likelihoods of different trees could be precisely identical: when the subtrees for different loci in a multi-locus partitioned analysis are the same. We suggested the term ``terrace" for the resulting trees and exploited a variety of results concerning subtrees and supertrees to characterize these terraces, which in some data sets can be quite large \citep{san11}. In general, trees with equal or nearly equal optimality scores can arise for many reasons, including lack of variable sites, homoplasy, and missing data. Terraces can arise in parsimony analysis when there are missing data or in likelihood-based analyses when missing data are distributed differently in different elements or blocks of a specified data partition. 

Data may be missing for many reasons, ranging from sampling biases inherent in studies that mine GenBank \citep{dri04,san08} to more biological causes, as in the loss of plastid genes transferred to the nuclear genomes of some plants \citep{Sabir14}, or the differential expression of genes found in EST libraries or transcriptomes \citep{let12}. Missing data affect phylogenetic tree reconstruction in many ways \citep{kea02,bur09,lem09,cho11,SimFreu11, wie11b,cra12,Sim12a,Sim12b,SimGol13, hin13,rou13,Sim14a,Sim14b}, but many properties of terraces depend only on the pattern of ``taxon coverage," the set of taxa for which data are present, in the different elements of the data partition \citep{ste10,san10,san11}. 

	In practice, terraces are generally small or absent in phylogenomic studies in which the number of loci greatly exceeds the number of taxa (\citealp{hej09}: 94 taxa $\times$ 1487 loci; \citealp{sal13}: 23 taxa $\times$ 1070 loci; \citealp{zwi14}: 11 taxa $\times$ 473 loci), but when the number of taxa greatly exceeds the number of loci, the number of trees on each terrace can be extremely large and the resulting increase in ambiguity challenging to overcome  (e.g., \citealp{pyr11}: 2871 taxa $\times$ 12 loci; \citealp{smi09}:
	55,473 taxa $\times$ 6 loci; \citealp{fab12}: 1265 taxa $\times$ 11 loci; \citealp{rab13}: 7822 ray-finned fish taxa $\times$ 13 loci).  Building on several basic mathematical properties of terraces we have characterized quantitatively \citep{san11}, we extend our understanding of terraces in several directions, examining several new properties, the problems they induce, and strategies for overcoming them. We show that terraces can arise under more general conditions than we thought, and that they can be larger than believed before. We construct for parsimony a method to recover the maximal terrace for a given tree. Most importantly, we explore how terraces can affect confidence assessments and conventional views on how methods for characterizing ambiguity, such as consensus methods or bipartition support values, can be misled by them. Finally, we begin to extend results to the case in which the model partitioning scheme violates the sufficient conditions for terraces but still generates data sets with patterns of ambiguity related to our other results on terraces.

\section{Background}	
\subsection{Definitions}
\subsubsection{Data}
Let $D$ be a data matrix, such as a multiple sequence alignment, of $n$ taxa (``rows") and $l$ characters (``sites", ``columns"), and let $\cP$ be a partition of the columns into $m$ elements or blocks, referred to colloquially throughout this paper as ``loci," although blocks might be something like different sets of codon positions, etc. Let the {\em taxon coverage matrix}, $C= C(D$), be an $n \times m$ matrix where the $ij$-th element is 1 if any sequence data are present for taxon $i$ and locus $j$, and 0 if the data are entirely missing (Fig.~\ref{fig2}).  This often happens simply because the locus was not sampled for that taxon. Note that there is no guarantee that taxa scored as ``1'' for a locus have phylogenetically {\em informative} sequence data--they merely have sequence data of some kind. 

	\begin{figure}[!h]
\centering
\includegraphics[scale=0.66]{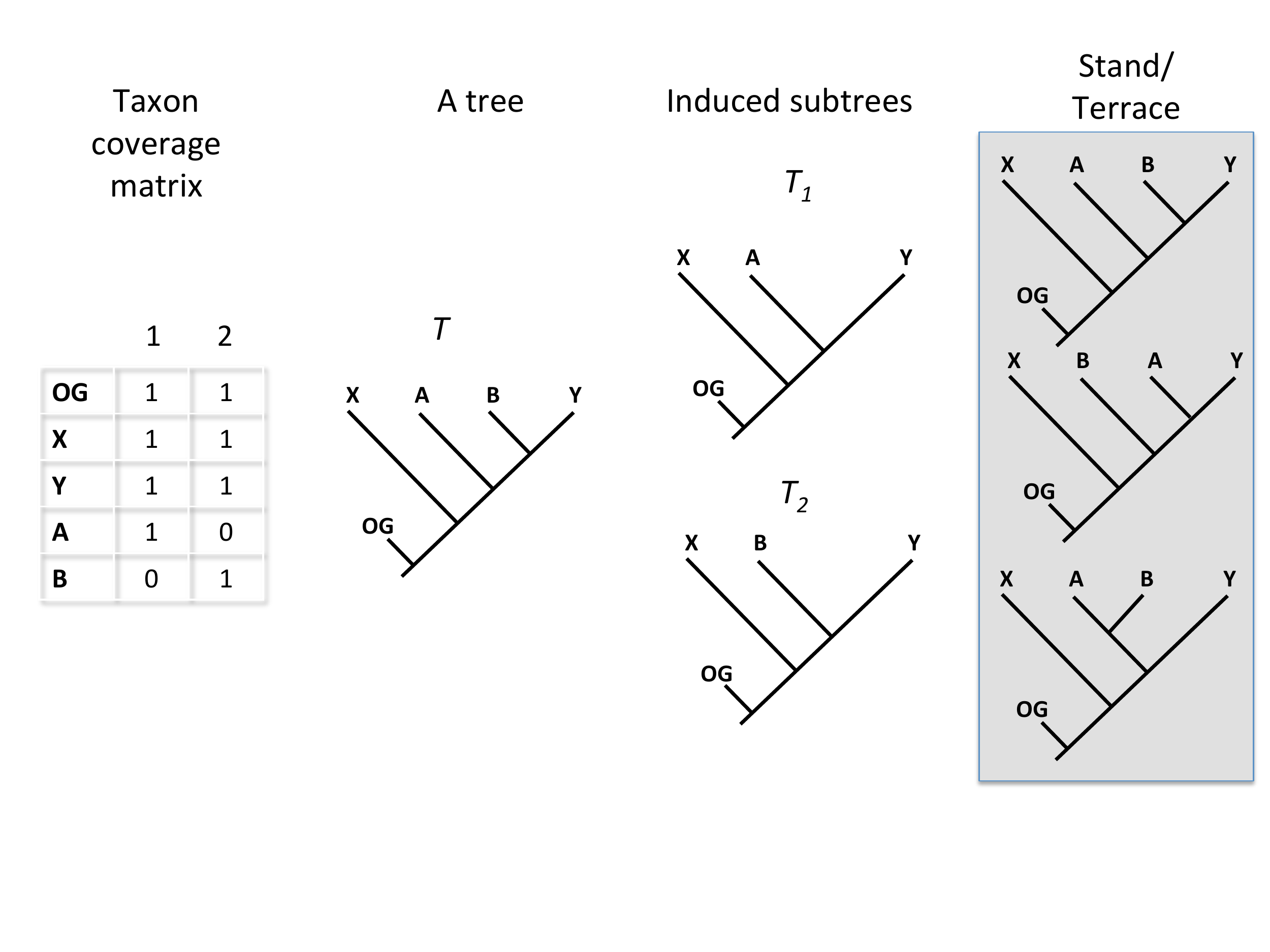}
\caption{Stands and terraces. The two-locus taxon coverage matrix at left (? indicates missing data) is partially filled and not decisive. For at least one tree (e.g., $T$), and the two subtrees induced by the coverage matrix, there is more than one tree (the three at right) consistent with the subtrees. This collection of trees is a {\em stand}. In addition, because the parsimony score and likelihood score (with an unlinked model), will be identical on these three trees, the stand is also a {\em terrace}.}
\label{fig2}
\end{figure}

The taxon {\em label set} for $D$, $\cL$, is the set of all taxon names in $D$, and the label set, $\cL_j$, for block $j$ of $D$ is the set of taxon names for which data are present for block $j$ in partition $\cP$ (i.e., the set of taxon names for which $C_{ij}$ = 1). 

\subsubsection{Trees and subtrees}For any tree, $T_0$, on the complete label set $\cL$ (e.g., the best maximum likelihood tree found in a heuristic tree search based on $D$), the label set for any locus, $j$, $\cL_j$ (which may have taxa missing) induces a subtree of $T_0$, which we write as $T_0 | \cL_j$. Let $\cQ(\cP,C,T_0) = {T_0 | \cL_1, T_0 | \cL_2,...,T_0 | \cL_m}$, be the set of subtrees for this tree that are induced by the partitioning scheme and taxon coverage matrix--that is, $T_0|\cL_j$ is the tree obtained from $T_0$ by removing any taxa that have data that are entirely missing for block $\cL_j$, for each block $\cL_j$ of the partition.

A tree $T$ is a {\em resolution} of $T'$ if $T'$ can be obtained from $T$ by collapsing one or more edges of $T$, which transforms binary nodes to polytomies. A tree $T$ on label set $\cL(T)$ {\em displays} another tree $T'$ on $\cL(T')$ (with $\cL(T') \subseteq \cL(T)$) if $T'$ is equal to or a resolution of $T|\cL(T')$. Intuitively, this allows a larger tree to display a smaller tree even if that smaller tree is less resolved than the larger one. This extra generality is appropriate given the usual notion of polytomies as reflecting uncertainty rather than multiple speciation \citep{sem03b}. 

\subsubsection{Stands}Here we define a new term not used in our previous work on terraces. A {\em stand} (of trees), $\cS=\cS(\cP,C,T_0)$, is the set of all binary phylogenetic trees on leaf set $\cL$ that display every subtree in $\cQ(\cP,C,T_0)$. In the example in Figure~\ref{fig2}, there are three trees that display the subtrees in $\cQ$, including the original $T_0$.
	
\subsubsection{Decisiveness}The taxon coverage matrix is said to be {\em decisive} for tree $T_0$ and partitioning scheme $\cP$ if $T_0$ is the only  tree that displays all subtrees in $\cQ$ \citep{ste10,san10}
--in other words,  if the stand has only one tree (i.e. $|\cS|=1$). In the example in Figure~\ref{fig2}, the taxon coverage matrix is not decisive for $T_0$.

\subsubsection{Terraces}Stands and decisiveness depend only on the coverage pattern, partition scheme and tree,  $\{\cP,C,T_0\}$; not on the actual sequence data, $D$, per se. However, under certain conditions, all trees in a stand, $\cS$, have precisely the same optimality score with respect to $D$, in which case we call the stand, $\cS$, a {\em terrace}, denoted $\cT$ (Fig. 2). In particular, when parsimony is used as the score, all trees in $\cS$ have the same score for {\em any} partitioning scheme, $\cP$, and $\cS$ is always a terrace. The set $\cS$ is also a terrace when likelihood is the optimality criterion if the likelihood is determined by a model for each locus in the partition, $\cP$, that is ``unlinked'' \citep{san11}. An unlinked (also known as ``partitioned'', or ``fully partitioned'' in the literature) model has different sets of parameters, including edge lengths, for each locus \citep{hes11,hed12,xiz13}. 

\subsection{Properties of terraces}
Because all terraces are stands, a number of results derived for stands, based only on trees, subtrees, and coverage patterns, are helpful for characterizing terraces. Some of these results hold only for rooted trees, but many problems can be effectively ``rooted" as long as there is one taxon in $C$ that is sampled for all blocks of the partition, in which case that taxon can serve as an ``operational" root for the purposes of an algorithm. The number of trees on a stand (or terrace) can increase exponentially with the size of the tree \citep{sem03a}. Despite this, several properties of terraces make them more tractable than they might otherwise seem, mainly because several summary statistics can be obtained directly without any computation involving the data matrix, $D$  \citep{san11}. For example, all trees on a terrace can be enumerated without recalculation of optimality scores. This takes advantage of an algorithm due to \citet{con95}, with a running time that scales linearly with the size of the terrace (rather than, say, exponentially with the size of the tree). 

The trees on a terrace can also be summarized by a strict consensus tree \citep{gor86} or Adams consensus tree, either of which can be constructed in polynomial time. This last claim (that we can  sidestep the enumeration of trees on the terrace again, or any further search using the data) is not obvious, but it holds in general (for rooted trees). In the case of strict consensus this was shown in  \citep{ste92} (using results from \citet{aho81}), while for Adams consensus, it relies on a particularly elegant result due to  \cite{bry} (Theorem 6.2) which states that the Adams consensus of a terrace is equal to the so-called BUILD supertree of the set of subtrees for each locus (i.e., the set $\cQ(\cP,C,T_0)$), and this supertree can be computed quickly by the algorithm of 
\citet{aho81}. Note that the Adams consensus tree displays each of those subtrees; that is, it is identical to (or resolves) them if extraneous taxa and edges are removed. Finally, testing if two trees, $T_1$ and $T_2$, are on the same terrace can be done efficiently merely by checking for equality of the pair of induced subtrees, $T_1 | \cL_j  = T_2 | \cL_j$; for every locus $j$ (cf. \citealp{day85}).
	
Terraces are reminiscent of tree islands but are contained within them, at least for rooted trees. An island is a region of tree space with optimality score better than some threshold, separated from other such regions by regions of lower score \citep{mad91,sal01}. Here a ``region'' is a set of trees that can be enumerated by a series of topological rearrangements that do not leave the region. Terraces are always wholly contained within a tree island, because all trees on a terrace of rooted trees can be reached by a series of nearest neighbor interchanges (NNIs) between trees of the same optimality score \citep{bor03,san11}. For unrooted trees this property does not necessarily hold.  For rooted trees, then, tree space can be thought of as a collection of terraces (on islands), the size and topological position of which are determined entirely by the partition $\cP$ and taxon coverage matrix $C$. Only the heights of the terraces depend on the data. To the extent that a tree ``island'' is a useful metaphor, it may be best to envision it as a rough landscape covered with terraces of different sizes and heights (Fig.~\ref{fig1}).

\section{New Results on Terraces}	

\subsection{1. Terraces occur in likelihood inference under a less restrictive set of assumptions}
In our previous work, we showed that terraces can occur in maximum likelihood inference whenever the substitution models for separate loci in partition $\cP$ have separate parameter sets \citep{san11}. Such models are known as ``partitioned'' or ``unlinked'' models. To be precise, suppose the substitution model at locus $i$ consists of two sets of free parameters $\{M_i, E_i\}$, where $M_i$ is a set of free parameters associated with the substitution rate matrix alone, and $E_i$ is a set of free edge rate parameters. An  {\em edge-unlinked} (EUL) model has free edge rate parameters $E_i$ for each locus. On the other hand if any  constraints are imposed to reduce the number of free edge parameters (such as those of the form $E_i \propto E_j$, for at least some $i$, $j$), we will say the model is {\em edge-linked} (EL); likewise for  substitution matrix rate-linked (RL) and rate-unlinked (RUL) models. Later we will refer to two simple EL models: the ``homotachy" model, in which $E_i = E_j$ for all $i$, $j$, and the ``proportional" model in which $E_i = \alpha E_j$ and $\alpha$ varies across loci to reflect locus specific overall rates.

In terms of these definitions, we showed \citep{san11} that terraces can arise when models are simultaneously EUL and RUL. This was a sufficient but not necessary condition. 
In fact, as we show now, terraces can arise even when models are EUL and the substitution rate matrices are constrained to be the same across loci.

\begin{proposition}
\label{thm1}
Let $\cS(C,\cP,T_0)$ be a stand of trees containing tree $T_0$ for partitioning scheme, $\cP$ and taxon coverage matrix, $C$.  For any model $\cM$ for computing the likelihood score  that is edge-unlinked (EUL)  and where  the $M_i$ parameters are constrained to be identical (i.e. $M_i = M$ for all $i$), all trees in $\cS$ have the same maximum likelihood score, $\varphi(\cM,T_0|D)$, as $T_0$, and hence $\cS$ is a terrace.
\end{proposition}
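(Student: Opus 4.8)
The plan is to split the claim into a purely combinatorial fact about the induced subtrees and an analytic fact about how the likelihood factorizes, and then to combine the two. The first and essential step is to upgrade ``$T$ displays $T_0|\cL_j$'' to the literal identity $T|\cL_j = T_0|\cL_j$ for every $T\in\cS$ and every block $j$. By definition of the stand, $T_0|\cL_j$ is equal to or a resolution of $T|\cL_j$; but $T$ is binary, so $T|\cL_j$ is binary (restricting a binary tree to a leaf subset and suppressing degree-two vertices again gives a binary tree), and $T_0|\cL_j$ is binary for the same reason. A binary tree on $\cL_j$ already has the maximum possible number of internal edges, hence admits no proper resolution, so the ``resolution'' alternative collapses to equality. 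This is the one place the binary hypothesis is truly used.

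The second step is to observe that under an EUL model with the rate matrices tied together ($M_i=M$), the likelihood factorizes through exactly these subtrees. Fix $T$ and numerical parameters; the likelihood of $D$ is the product over blocks $j$ of the likelihood of the sub-alignment $D_j$. Since every taxon outside $\cL_j$ has entirely missing data in block $j$, the standard pruning recursion marginalizes those leaves away, so the $j$-th factor depends on $T$ only through $T|\cL_j$, the free edge-rate parameters $E_j$ of that block, and $M$. Because the $E_j$ are unlinked and unconstrained, maximizing the $j$-th factor over $E_j$ is the same as maximizing the likelihood of $D_j$ over all branch-length assignments on the topology $T|\cL_j$, producing a quantity $g_j(T|\cL_j,M)$ (also a function of $D_j$) that no longer refers to $T$ beyond the induced subtree. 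The shared $M$ forbids pushing the maximization over $M$ inside the product, but that is not needed: for each fixed $M$ the $E_j$ separate across blocks, so $\varphi(\cM,T|D)=\max_{M}\prod_{j} g_j(T|\cL_j,M)$. Substituting the Step-1 identity, this expression is the same for $T$ as for $T_0$, whence $\varphi(\cM,T|D)=\varphi(\cM,T_0|D)$; as $T\in\cS$ was arbitrary, $\cS$ is a terrace.

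The delicate points I expect to have to argue carefully are (a) the exact-equality claim of Step~1, which is easy once one realizes it must be phrased as equality rather than mere displaying; and (b) the assertion in Step~2 that maximizing over the block-specific edge rates coincides \emph{exactly} with optimizing branch lengths on $T|\cL_j$, which rests on the mild fact that every branch-length assignment on $T|\cL_j$ is realized by some assignment on $T$ (distribute the length of an induced edge arbitrarily along the corresponding path, and note that branch lengths inside the marginalized subtrees are irrelevant to $D_j$). The conceptual heart of the strengthening over the earlier result is exactly that sharing $M$ across loci would be fatal if the blockwise identity $g_j(T|\cL_j,M)=g_j(T_0|\cL_j,M)$ held only after optimizing $M$ jointly; but it holds for \emph{every} fixed $M$, so it survives the outer maximization, and no rate-unlinked assumption is required.
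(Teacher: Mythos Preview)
Your proof is correct and follows essentially the same route as the paper's: factorize the likelihood over blocks, use the EUL assumption to push the supremum over each $E_j$ inside the sum while keeping the supremum over the shared $M$ outside, and then invoke $T|\cL_j = T_0|\cL_j$ to conclude equality of scores. You are more explicit than the paper on two points it treats as understood---the reduction of ``displays'' to literal equality of induced subtrees via the binary hypothesis, and the exact correspondence between optimizing $E_j$ on $T$ and optimizing branch lengths on $T|\cL_j$---but the underlying argument is the same.
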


The proof of this result is presented in the Appendix. This finding implies that it is the lack of commonality between {\em edge length parameters} across loci that is necessary for a stand of trees to be a terrace under likelihood. Rate matrices between loci need not have different parameter sets. Henceforth we will refer to this broader class of maximum likelihood inference with just an EUL assumption as ``ML-EUL.''

\subsection{2. Maximum size of stands and terraces}

Given a coverage matrix, $C$ and partition scheme, $\cP$,  we can calculate the size of the stand, $\cS$ containing some tree, $T_0$, without reference to the data, $D$. If the optimality criterion is MP or ML-EUL, then $\cS$ is also a terrace and the size of the terrace will be the size of the stand. Generally, the presumption is that $\cP$ is chosen to reflect meaningful biological aspects of the data, such as different loci, or disjoint sets of codon positions. Nonetheless, the choice is arbitrary and it is possible that this terrace for $\cP$ is actually imbedded in a larger terrace (perhaps with a different likelihood score) under a different partitioning scheme. First, we define a special kind of partitioning scheme and then show that this will correspond to the largest terrace among all possible partitioning schemes. 

Given a taxon coverage matrix, $C$, consider any partition $\cP = \{B_1,\ldots, B_l\}$ of the columns of the data matrix, $D$ (i.e.  the sets $B_i$ are disjoint subsets of columns, which cover every column).  For a given block $B_i$ of $\cP$, let $\cL_i = \cL(B_i)$ be the taxa that are present for at least one column in $B_i$.

Now suppose we have another partition $\cP' = \{B_1', \ldots, B'_{l'}\}$.  
We say that $\cP'$ {\em refines} $\cP$ if each block of $\cP'$ is a subset of some block of $\cP$ (equivalently, each block of $\cP$ is either equal to a block of $\cP'$ or is the union of
two or more blocks of $\cP'$). 

However, a refinement that includes breaking $B_j$ into say, $K$, smaller sets,  $B'_{i_1}, \ldots, B'_{i_K}$, should be disallowed if any of the labels sets, $\cL(B'_{i_k})$, are duplicated, as otherwise this would allow a trivial refinement of any partition into one in which each block consists of just a single column in the data matrix, $D$. With this restriction in place, there will be a unique (and usually nontrivial) {\em maximal partition} for any taxon coverage matrix, $C$, that cannot be refined further, we denote this as $\cP_{\rm max}$.  Without this restriction the maximal partition would simply consist of  $l$ blocks, one for each column in the data matrix.  This maximal permitted partition $\cP_{\rm max}$ can be described formally as follows. Let $\Omega = \{\cL_1, \ldots, \cL_m\}$ (this set may have size less than $m$ since some $\cL_i=\cL_j$ is possible for $i \neq j$).  Then $\cP_{\rm max}= (B_A: A \in \Omega)$, where $B_A = \{j: \cL_j = A\}$.

\begin{proposition}
\label{pro1}
If $\cP'$ refines $\cP$ then $\cS(C,\cP,T_0)  \subseteq\cS(C,\cP',T_0)$; in  particular, the former set is never larger than the latter.
\end{proposition}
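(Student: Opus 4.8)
The plan is to show that every tree in $\cS(C,\cP,T_0)$ lies in $\cS(C,\cP',T_0)$, by reducing everything to one elementary fact about the ``displays'' relation: \emph{if a tree $T$ displays a tree $S$ and $\cL''\subseteq \cL(S)$, then $T$ also displays the restriction $S|\cL''$.}

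First I would fix an arbitrary tree $T\in\cS(C,\cP,T_0)$, so $T$ is a binary tree on $\cL$ that displays every subtree in $\cQ(\cP,C,T_0)=\{T_0|\cL_1,\ldots,T_0|\cL_m\}$, where $\cL_i=\cL(B_i)$. To conclude $T\in\cS(C,\cP',T_0)$ it suffices to show that $T$ displays $T_0|\cL(B'_k)$ for every block $B'_k$ of $\cP'$.

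So next, fix a block $B'_k$ of $\cP'$. Because $\cP'$ refines $\cP$, there is a block $B_i$ of $\cP$ with $B'_k\subseteq B_i$; hence $\cL(B'_k)\subseteq\cL(B_i)=\cL_i$, since any taxon present in some column of $B'_k$ is a fortiori present in some column of $B_i$. Restriction of trees is transitive, so $T_0|\cL(B'_k)=(T_0|\cL_i)|\cL(B'_k)$. Since $T$ displays $T_0|\cL_i$, the elementary fact above (applied with $S=T_0|\cL_i$ and $\cL''=\cL(B'_k)$) shows that $T$ displays $(T_0|\cL_i)|\cL(B'_k)=T_0|\cL(B'_k)$. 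As $B'_k$ was arbitrary, $T\in\cS(C,\cP',T_0)$, which gives the inclusion, and the size comparison is then immediate.

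The only genuine content is the elementary fact, which I would establish by the standard cluster/split argument: for rooted trees, $T$ displays $S$ iff every cluster of $S$ equals the restriction to $\cL(S)$ of some cluster of $T$; intersecting with $\cL''\subseteq\cL(S)$ then exhibits every cluster of $S|\cL''$ as the restriction to $\cL''$ of a cluster of $T$, i.e.\ $T$ displays $S|\cL''$ (the unrooted case is identical working with splits, and the degenerate case where the subtrees in $\cQ$ are already binary — as they are whenever $T_0$ is binary — is trivial, since then ``displays'' just means ``equals'' and restriction commutes with equality). I expect this lemma — really just bookkeeping about how restriction interacts with the edge-collapse/refinement relation hidden in the definition of ``displays'' — to be the only place needing care; everything else is unwinding definitions. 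I would also note that the argument never uses the no-duplicate-label-set restriction on permitted refinements: the inclusion holds for \emph{every} refinement $\cP'$ of $\cP$.
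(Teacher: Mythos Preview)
Your proof is correct and follows essentially the same route as the paper: pick a block $B'_k$ of $\cP'$, use the refinement hypothesis to find $B_i\supseteq B'_k$ so that $\cL(B'_k)\subseteq\cL_i$, and then apply transitivity of restriction. The only difference is cosmetic: the paper works directly with the equality $T|\cL_i=T_0|\cL_i$ (which is what ``displays'' reduces to here, since $T_0$ and hence each $T_0|\cL_i$ is binary), so your detour through the general ``displays'' lemma and the cluster/split bookkeeping is unnecessary---though your observation that the no-duplicate-label-set restriction plays no role is correct and worth keeping.
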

\begin{proof}
Let $\cL'_j = \cL(B'_j)$, the taxa present in block $B'_j$ for $j = 1, \ldots, l'$. 
Suppose that $T$ is a tree in the stand of $T_0$ relative to $\cP$.  This means that
$T|\cL_i = T_0|\cL_i$ for $i =1, \ldots, l$.  Consider a block $B'_j$ of $\cP'$. Since $\cP'$ refines $\cP$,   $B'_j$  is a subset of some block of $\cP$, say block $B_i$, and so  $\cL'_j \subseteq \cL_i$. 
In that case:
$$T|\cL'_j = (T|\cL_i)|\cL'_j = (T_0|\cL_i )|\cL'_j = T_0|\cL'_j.$$
Thus, $T|\cL'_j = T_0|\cL'_j$ for each block $\cL'_j$ of $\cP'$, and  so $T$ is in the stand of $T_0$ relative to $\cP'$.
\end{proof}

\begin{corollary}
For any permitted partition $\cP$ we have:
$|\cS(D, T_0, \cP_{\rm max})| \ge |\cS(D, T_0, \cP)|$
for all $\cP$.
\end{corollary}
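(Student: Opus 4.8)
The plan is to obtain this as an immediate consequence of Proposition~\ref{pro1}. First observe that a stand depends on the data only through the coverage matrix $C=C(D)$, so $\cS(D,T_0,\cP)$ and $\cS(C,\cP,T_0)$ denote the same set; I will use the latter notation. It then suffices to prove the containment $\cS(C,\cP,T_0)\subseteq\cS(C,\cP_{\rm max},T_0)$ for every permitted partition $\cP$, since a set inclusion yields the asserted inequality of cardinalities.

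The single fact needed is that $\cP_{\rm max}$ refines every permitted partition $\cP$. This is built into the construction of $\cP_{\rm max}$: its blocks are exactly the classes $B_A=\{j:\cL_j=A\}$ of columns that share a common label set, and the ``no duplicated label set'' restriction on refinements is precisely what forbids a permitted partition from distributing the columns of one such class among two or more of its blocks; hence every block of a permitted $\cP$ is a union of blocks of $\cP_{\rm max}$, which is what ``$\cP_{\rm max}$ refines $\cP$'' means. Granting this, Proposition~\ref{pro1} applied to the pair $(\cP,\cP_{\rm max})$ gives $\cS(C,\cP,T_0)\subseteq\cS(C,\cP_{\rm max},T_0)$, and passing to cardinalities yields $|\cS(D,T_0,\cP_{\rm max})|\ge|\cS(D,T_0,\cP)|$.

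The one step that genuinely has to be pinned down is the claim that $\cP_{\rm max}$ refines every permitted $\cP$; once the meaning of ``permitted'' is fixed it is routine. In fact the containment can be checked directly, and then it holds for an arbitrary column partition $\cP$ (not just permitted ones): given $T\in\cS(C,\cP,T_0)$ and any block $B_A$ of $\cP_{\rm max}$, write $A=\cL_j$ for some column $j$ and let $B_i$ be the block of $\cP$ containing $j$; then $A=\cL_j\subseteq\cL(B_i)$, so restricting the equality $T|\cL(B_i)=T_0|\cL(B_i)$ to $A$ gives $T|\cL(B_A)=T_0|\cL(B_A)$, and since $B_A$ was arbitrary, $T\in\cS(C,\cP_{\rm max},T_0)$. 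This alternative route bypasses Proposition~\ref{pro1} entirely and is the safer one to write up if the exact definition of a permitted partition is at all delicate.
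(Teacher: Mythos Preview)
Your proposal is correct and follows the paper's own approach exactly: the paper's proof is the single sentence ``This follows from Proposition~\ref{pro1}, since $\cP_{\rm max}$ is a refinement of every permitted partition of the loci,'' and your first paragraph is precisely this argument, with the refinement claim spelled out. Your added direct verification in the final paragraph is a worthwhile supplement: it shows the containment $\cS(C,\cP,T_0)\subseteq\cS(C,\cP_{\rm max},T_0)$ for \emph{any} column partition $\cP$, not just permitted ones, and it does so without needing to check that $\cP_{\rm max}$ literally refines $\cP$---a point on which the paper's somewhat informal definition of ``permitted'' could otherwise leave room for doubt.
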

\begin{proof}
This follows from Proposition~\ref{pro1}, since $\cP_{\rm max}$ is a refinement of every permitted partition of the loci. 
\end{proof}

	Thus, the largest stand containing a given tree can be obtained from the maximal partitioning scheme allowed for a given taxon coverage pattern, $C$. Under MP or ML-EUL inference, this will also be the largest terrace containing this tree. 
	
	For example, suppose a multiple sequence alignment is first partitioned into two loci, $\cP_I = \{I_1, I_2\}$, where the loci are labeled by the partition name with a subscript for the locus number, as shown in Figure~\ref{fig3}. This partition induces a stand of 13 trees containing the tree shown. However, perhaps locus $I_2$ actually consists of three biologically meaningful blocks: two exons and an intron, and suppose further that the label sets for the exons are the same but differ from that for the intron. Our second partition is then described by $\cP_{II} = \{II_1, II_2, II_3\}$ (and has the indicated distribution of columns: Fig. ~\ref{fig3}). Formally, $\cP_{II}$ is a refinement of $\cP_{I}$. The stand induced by this new three block partition, $\cP_{II}$, has 23 trees (Figure ~\ref{fig3}). This is the maximal partition for these data, and 23 is the size of the largest stand containing the given tree. Also, from Proposition \ref{pro1}, the smaller stand of trees is a subset of the larger.
	
\begin{figure}[!h]
\centering
\includegraphics[scale=0.66]{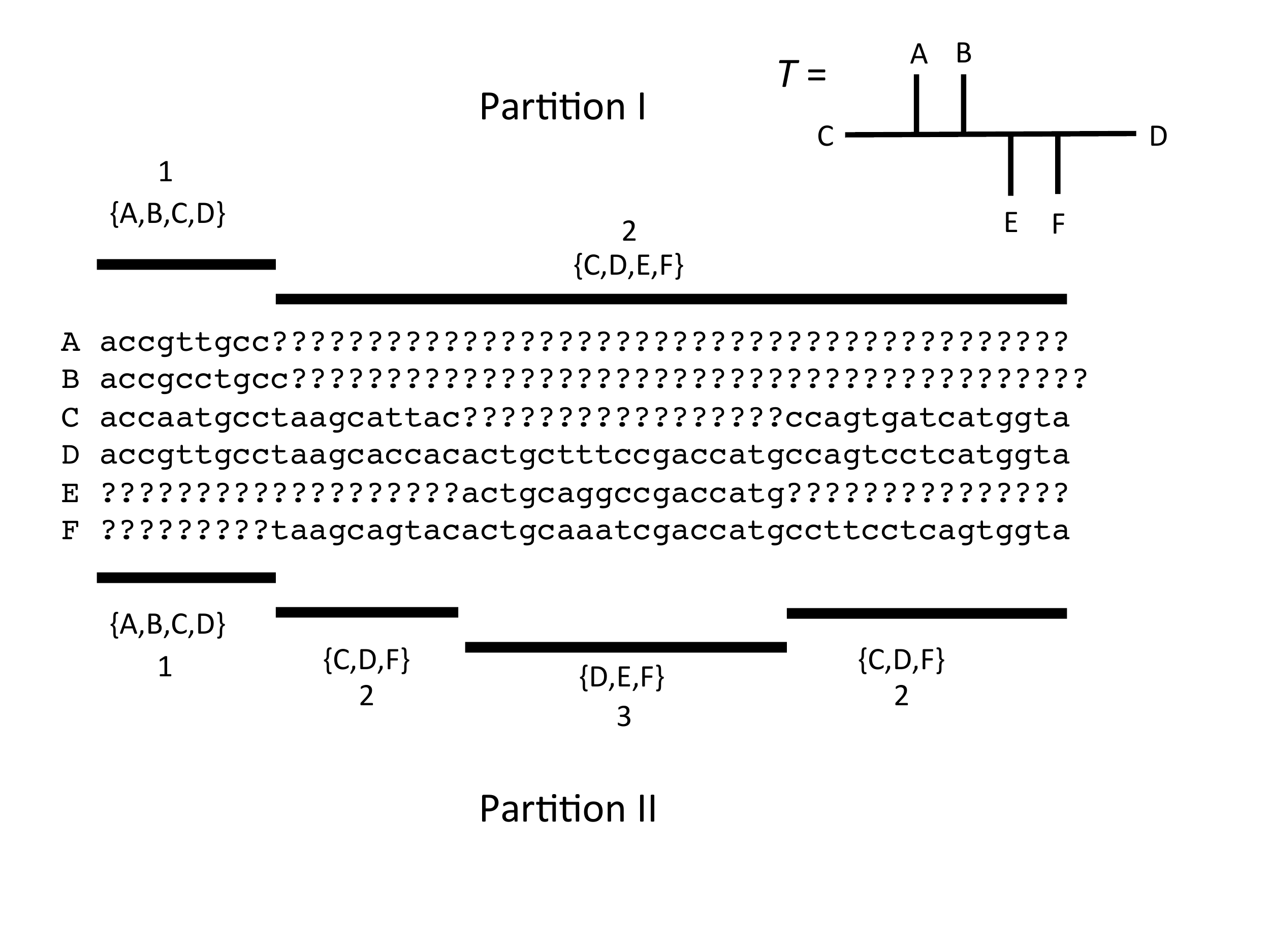}
\caption{Partitions of different sizes and compositions can induce terraces of different sizes containing a given tree, $T$ (upper right). Two partitioning schemes are labeled I and II. Partition elements/blocks/``loci" are labeled with 1,2, 2(1), etc. Label sets for each block are indicated in curly brackets. Partition II is a ``maximal partition". There are 13 trees on the terrace for Partition I and 23 trees for Partition II (for parsimony), each containing $T$.}
\label{fig3}
\end{figure}

	Under MP and ML-EUL inference, these stands are also terraces, so the largest terrace containing the indicated tree has 23 trees on it. The sets of trees in these terraces arising from different partitioning schemes are the same for MP and ML-EUL inference, but there is an interesting distinction with respect to their optimality scores. For MP, the situation is particularly simple. For two partitioning schemes in which $\cP_{II}$ is a refinement of $\cP_{I}$, the two terraces have the same parsimony score, so it is reasonable to visualize the smaller terrace is being literally imbedded in the larger one at the same ``elevation" in the landscape of tree space (Figure ~\ref{fig4}). Consequently, for parsimony, when characterizing trees on a terrace based on a particular partition, $\cP$, it is useful to check if $\cP$ is the maximal partition, and if not, also check $\cP_{\rm max}$. The latter will be a more accurate representation of the actual extent of equally optimal trees around the given tree. 
	
	For ML-EUL, the situation is more complex. The likelihood scores of the trees in the terraces for partitioning schemes $\cP_{I}$ and $\cP_{II}$ could well be different, but all trees within either terrace will still have the same score once a partitioning scheme is fixed. This means changing a partitioning scheme to the maximal partitioning scheme, for example, not only expands the set of trees on the terrace, but potentially changes the likelihood score as well, unlike in parsimony.
\begin{figure}[!h]
\centering
\includegraphics[scale=0.66]{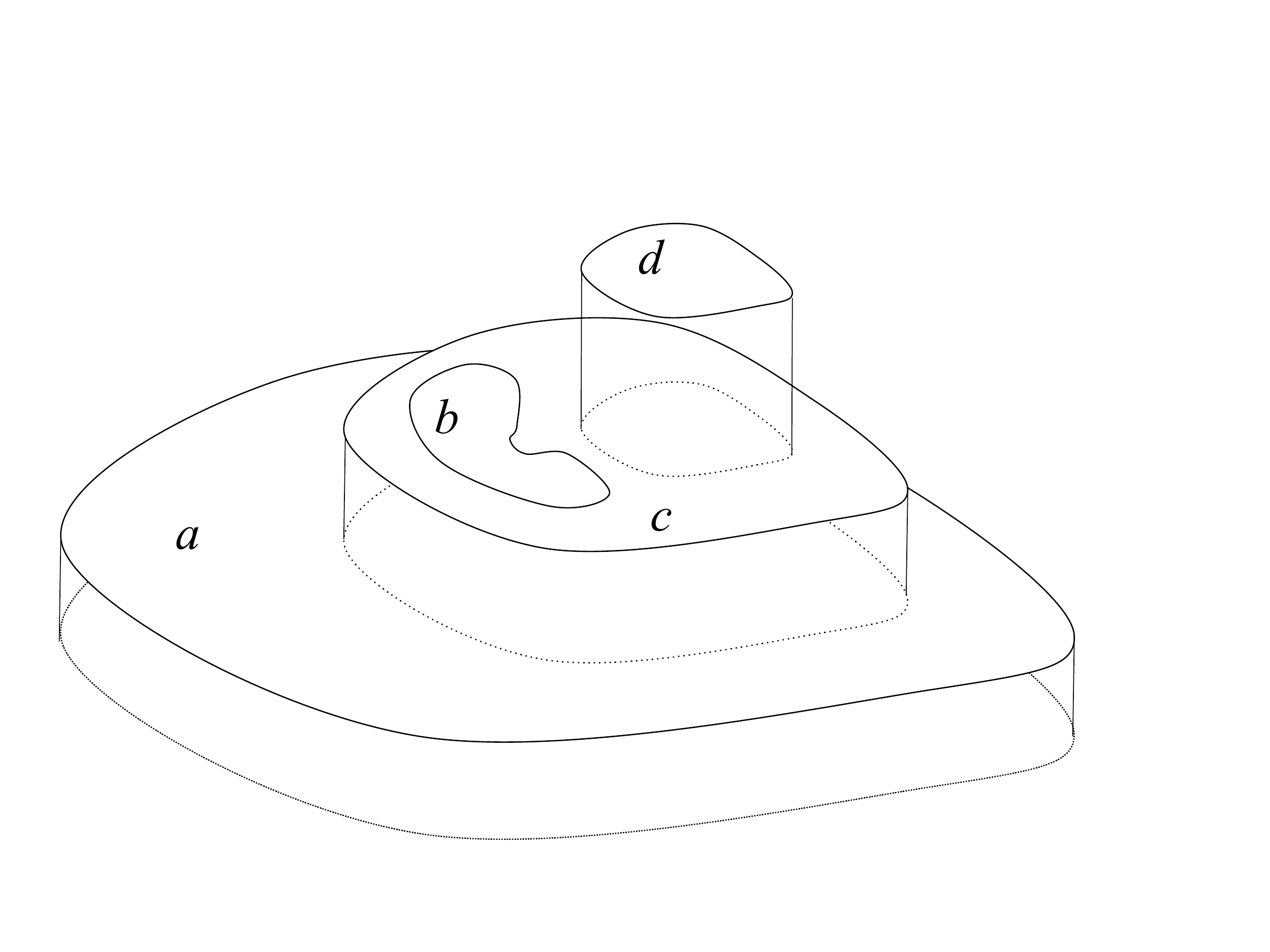}
\caption{Refined view of the landscape of terraces with respect to different partitioning schemes. Regions 1-4 are different regions of tree space. Region 4 is a terrace with the highest likelihood. Region 2 is a subset of region 3 and both are terraces identified from two different partitioning schemes. Region 3 corresponds to a maximal partitioning scheme. }
\label{fig4}
\end{figure}

\subsection{3. Impact on confidence assessment: bootstrap proportions}

Terraces add ambiguity to phylogenetic tree inference directly by increasing the number of equally optimal solutions. In the next two sections we discuss how they also alter {\em estimates} of the quality of trees. The bootstrap  \citep{fel85} case is especially clear, as we demonstrate with detailed discussion of a simple example (Fig.~\ref{figBoot1}). In keeping with many results related to terraces, much of this can be predicted by reference to the stand, $\cS$, induced by the taxon coverage matrix and partition scheme for a given tree. Henceforth, we assume inference is with MP or ML-EUL and that results for stands carry over to terraces exactly. We simplify the problem further by letting sequence lengths within loci be so long that the only variation in bootstrap replicates has to be due to the existence of terraces . Specifically, let $T^*$ be a tree with four subtrees, X,Y,A,B, in a topology such that there is a bipartition given by XA$\mid$BY (Fig.~\ref{figBoot1}). Assume that X and Y are leaf taxa, but A and B are subtrees with $n_A$ and $n_B$ leaves respectively. Further assume that sites are all evolving on $T^*$ according to a model with ``reasonable'' parameter values such that, together with the assumption that the number of sites, $l$, is large enough, if complete taxon coverage were available, both ML-EUL or MP would infer the correct tree with $\sim100\%$ accuracy. Assume there is a partition, $\cP$, of data $D$ into two loci, 1 and 2.  The taxon coverage matrix, $C$ is such that taxa in two of the subtrees, X and Y, are sampled for both loci, but taxa in subtree A are only sampled for locus 1, and subtree B only for locus 2 (i.e., the same as in Fig.~\ref{fig2}, except no outgroup is present). 

\begin{figure}[!h]
\centering
\includegraphics[scale=0.66]{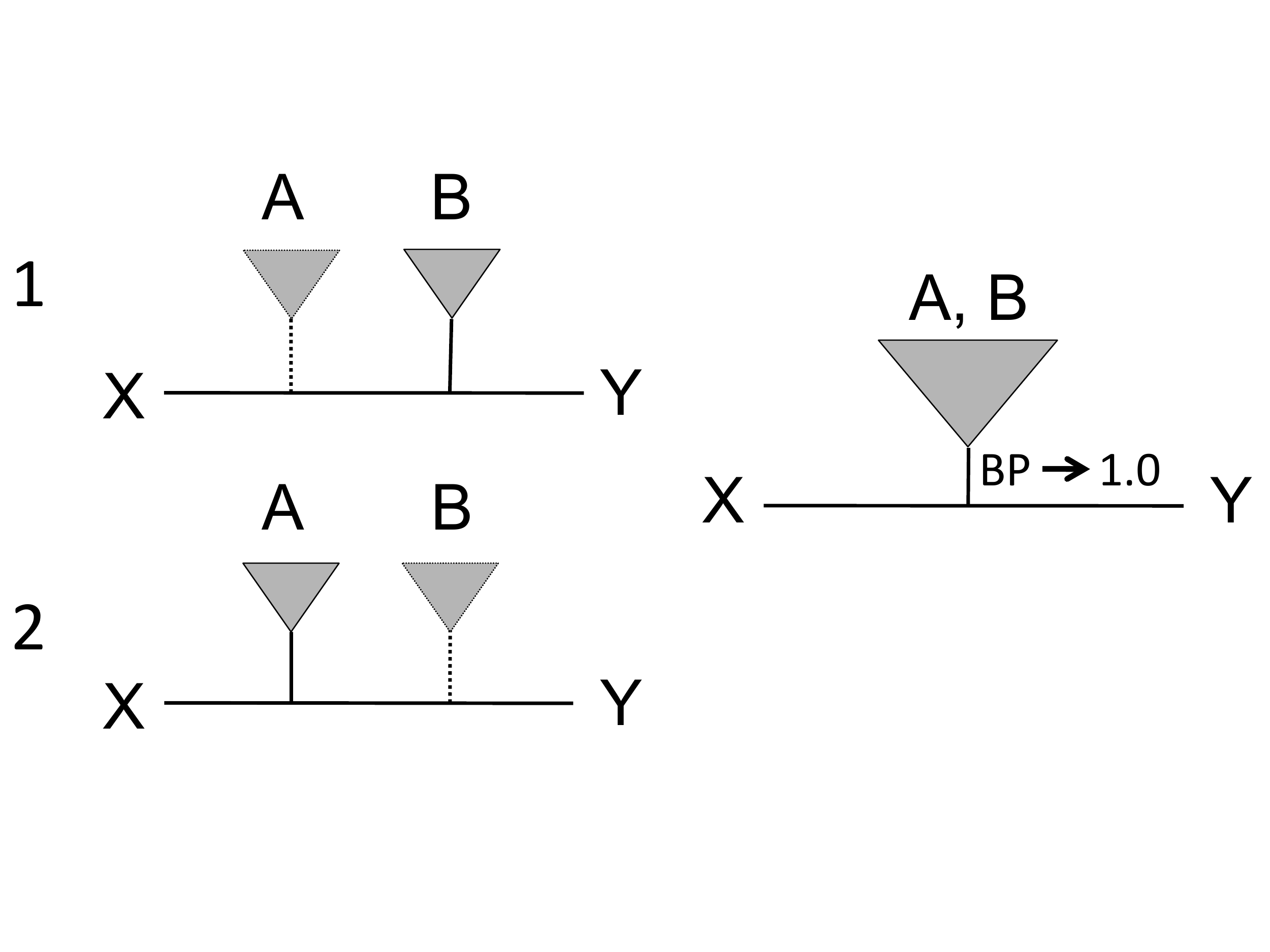}
\caption{ Impact of terraces on the bootstrap. True tree and taxon coverage as in Fig.~\ref{fig2} except outgroup removed and taxa $A$ and $B$  are now subtrees with $n_A$ taxa (present only for locus 1) and $n_B$ taxa (present for locus 2) respectively. As $n_A$ and $n_B$ get larger the bootstrap proportion for the incorrect bipartition at right goes to 100\%.}
\label{figBoot1}
\end{figure}

With complete sampling and ``ideal'' simulation conditions (``F84'' substitution model with equal base frequencies in Seq-Gen [\citealp{Ramb97}]; $l$ = 10,000 nt; all edge lengths = 0.1 substitutions/site; constant rates across sites), the bootstrap proportion (BP) for XA$\mid$BY are indeed 100\%, whether inferred with MP or ML-EUL . However, with the partial taxon coverage, $C$, outlined above, the true tree, $T^*$, occurs in a stand, $\cS$, with other trees. The number of trees in  $|\cS|$ grows with $n_A$ and $n_B$ (Fig.~\ref{figBoot2}), and many conflict with $T^*$ specifically by interleaving leaves from subtree A with leaves from subtree B, such that AB$\mid$XY is seen instead of XA$\mid$BY. In fact, perhaps surprisingly, these incorrect bipartitions rise in frequency in the set of bootstrap bipartitions with increasing $n_A$ and $n_B$ (Fig.~\ref{figBoot2}). For example, if $n_A = 3$ and $n_B = 3$, then the stand has 107 binary trees, but only 2 of these are consistent with the XA$\mid$BY bipartition in the true tree, $T^*$. Since topological error due to the sequence data has been factored out by long sequence length, bootstrapping will randomly sample the 107 trees on the terrace, and assuming one tree per replicate is kept (see below), the BP for the incorrect bipartition AB$\mid$XY will converge to 105/107 = 0.98. If $1-BP$ is regarded as a $P$-value for the null hypothesis of non-monophyly \citep{efr96,sus09}, this significance level is greatly inflated.

\newpage
			\begin{figure}[!h]
 \centering
 \includegraphics[scale=0.75]{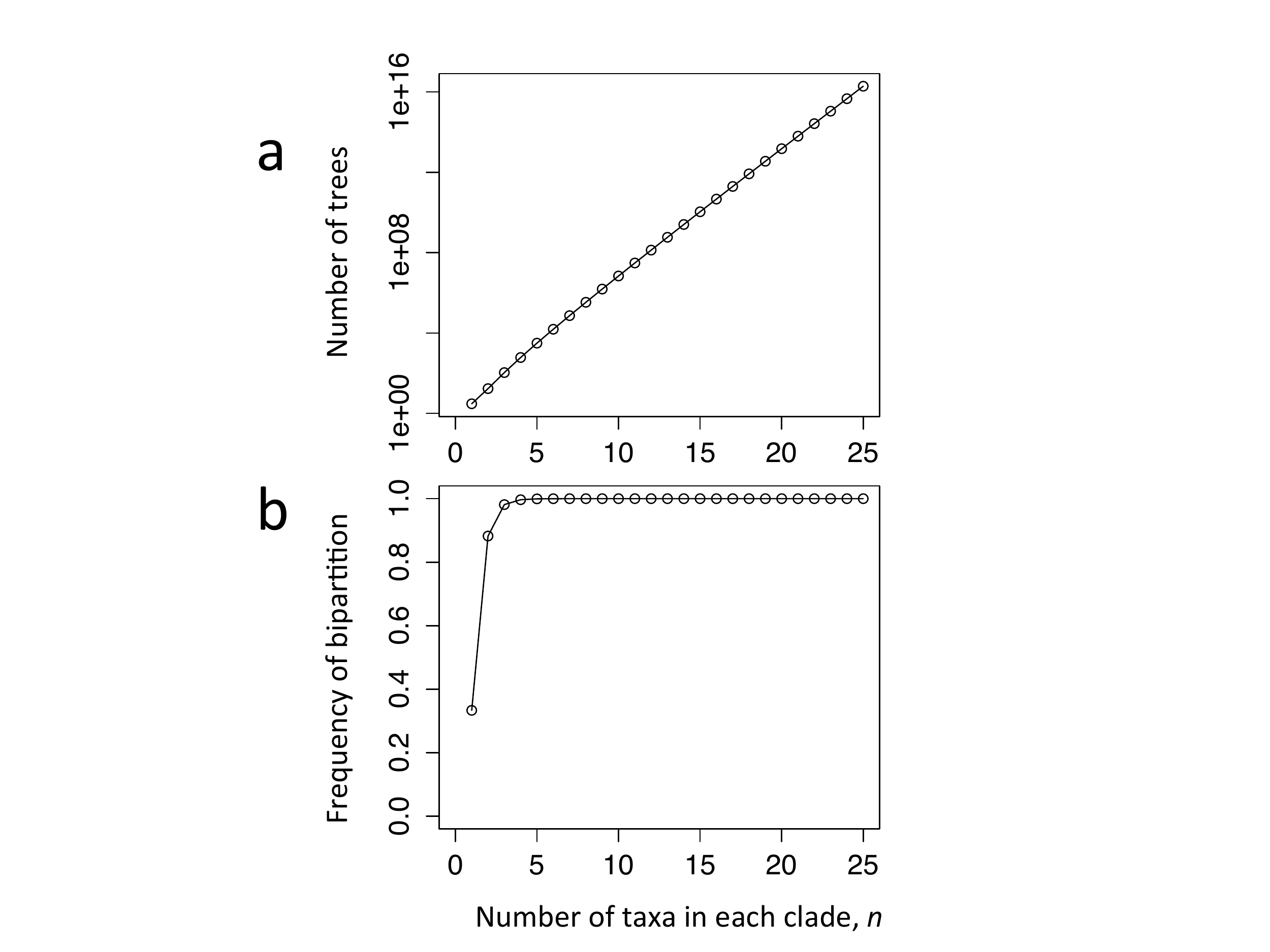}
 \caption{ Scaling of terrace size (A) and frequency of ``incorrect" AB$\mid$XY bipartition (B), in the example of Fig.~\ref{fig2}.  Clade size, $n$, is treated as the same in both A and B of Fig.~\ref{fig2}.}
 \label{figBoot2}
 \end{figure}

	To explore this small example more generally, we need to know how many rooted binary trees correctly display the subtrees $T^* | \cL_A$ and $T^* | \cL_B$, where we have changed notation a bit and $\cL_A$ refers to the subtree having just the leaf taxa in clade A, and likewise for B. Notice that the taxon sets $\cL_A$ and $\cL_B$ are disjoint. Let $n = n_A + n_B$, and $R(k) =1 \times 3 \times 5 \times \dots \times(2k-3)$ be the number of rooted binary trees for $k$ leaves.  If $n_A$ or $n_B \le 3$, then the number of trees on the terrace is  
	\begin{equation}
	\label{root_count}
	2+\frac{R(n)}{R(n_A)R(n_B)},
\end{equation}
a result that depends only on the two clade sizes. The ``2'' term corresponds to the two trees on the terrace with bipartitions of XA$\mid$BY and XB$\mid$AY.  The term on the right corresponds to the set of trees all of which have the bipartition AB$\mid$XY, but there are several ways to interleave the A and B clades and still display the original two subtrees. If both A and B are larger than three leaves, the expression depends on the topology of the subtrees (i.e., the entire tree), not just their sizes, a result due to \citet{con86}, so a more complex calculation for Equation~(\ref{root_count}) would be necessary. On {\em average}, however, across a uniform random sample of clade topologies, the mean number of trees converges to the expression in Equation~(\ref{root_count}). Thus, for certain patterns of partial taxon coverage, the bootstrap proportion for a given clade may tend to 0 or 1 depending not on the sequence data but on the number of taxa (and possibly tree topology) and taxon coverage pattern in specific clades in the tree.
	
These results implicitly assume that bootstrap protocols sample trees from a terrace uniformly at random. They may not. Heuristic search strategies may not be constructed to guarantee this. Moreover, the very notion of ``equality'' of the likelihood score in numerically intensive likelihood calculations is problematic, and maximum likelihood reconstruction in RAxML \citep{Stam14} and GARLI \citep{Zwickl06} operates on the implicit assumption that equally optimal trees either do not exist or cannot be unambiguously identified during tree search. For maximum parsimony inference, the situation is a little clearer. With integer value parsimony scores, maximum parsimony inference can certainly identify sets of trees, and there are then at least two choices for treating them. PAUP \citep{paup} uses a ``frequency within replicates''  \citep{dav04} approach in which the the frequency of a bipartition in the equally optimal trees found in a given replicate is summed across replicates. This procedure would effectively treat the trees on a terrace in the same way as our analytical results above. However, an alternative is a more conservative ``strict consensus'' approach in which a set of equally optimal trees must all have the bipartition for it to be regarded as present in that replicate \citep{dav04,SimFreu11}. Were we to adopt such a rule here, no bipartitions involving A, B, X, and Y would be supported, a very different result.

\subsection{4. Impact on confidence assessment: Bayesian posterior probabilities}
Given the issues raised by terraces for MP and ML-EUL, it is reasonable to expect some effects on calculation of Bayesian posterior probabilities, but in this case the impact is determined by an interaction between the taxon coverage pattern, tree topology, and priors on edge lengths. Let us simplify the bootstrap example to a tree with four leaves (i.e., $n_A = n_B =1$ in Figure~\ref{figBoot1}) but otherwise with the same taxon coverage matrix and partitioning scheme. Each element of the partition is 5000 nt, simulated in Seq-Gen \citep{Ramb97} under an F84 model with equal base frequencies, on the true tree AX$\mid$BY. We assume that likelihood calculations are carried out with an EUL model so stands are also terraces with respect to the likelihood score. There is then a stand of three binary trees, all with the same likelihood (in fact, these are all possible binary trees for four taxa). Perhaps surprisingly, however, their posterior probabilities may not be not equal. In particular, the posterior for the incorrect bipartition (AB$\mid$XY) increases as the lengths of the edges leading to A and B increase (calculated in MrBayes v. 3.1.2 [\citealp{ron12}]; 5 million generations; default burn in). For example, if the two long edges are four times the length of the other edges in the tree, the posterior probability is 81\% for the incorrect tree (Fig.~\ref{fig5}), whereas if the edge lengths are all the same, the correct tree has highest posterior probability.  

\begin{figure}[!h]
\centering
\includegraphics[scale=0.66]{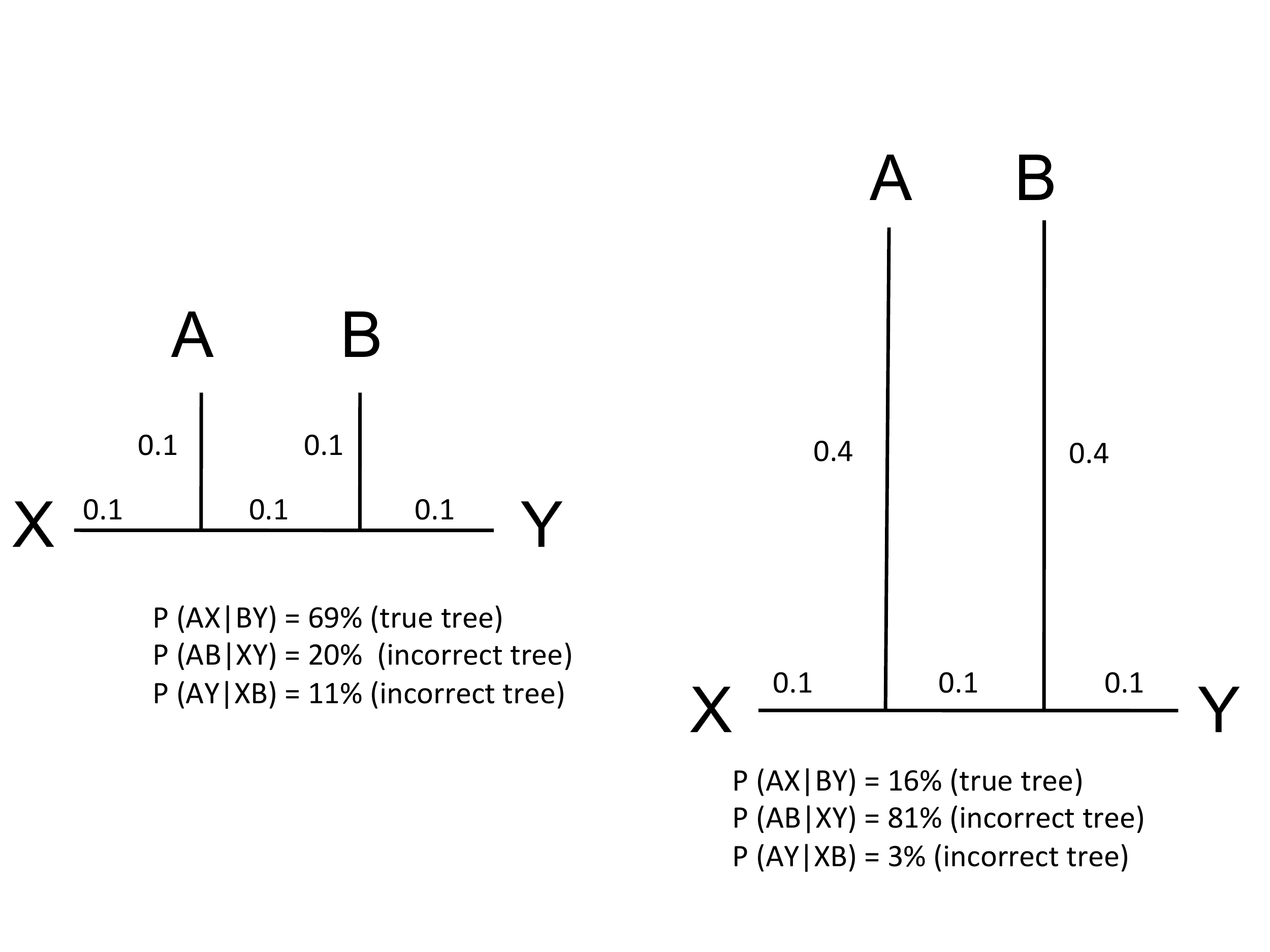}
\caption{ Impact of terraces on Bayesian posterior probabilities. True tree and taxon coverage as in Fig.~\ref{fig2} except outgroup removed. As terminal edge lengths for A and B get longer the posterior probability inferred by MrBayes (v. 3.1.2: 5 million generations; default burn in) for the incorrect bipartition, AB$\mid$XY increases. In this example each element of the partition is 5000 nt, simulated in Seq-Gen under a F84 model with equal base frequencies. Edge lengths are indicated in substitutions/site.}
\label{fig5}
\end{figure}  

This phenomenon can be traced ultimately to how Bayesian inference handles missing data. We can begin to understand this by considering the the simplest context of partial taxon coverage: with a trivial ``partition'' consisting of only a single block and having one leaf taxon, $x$, with missing data throughout that block. In other words, this is a data set in which one leaf, $x$, has no data at all. Consider a tree, $T$,  with branch lengths, containing taxon  
$x$ as a leaf, and the tree, $T_{-x}$ (with its inherited branch lengths) obtained by deleting leaf $x$ (and its incident branch) from $T$.    If the sequence of $x$ consists entirely of ``?''s, the parsimony and likelihood scores of any tree, $T'$, obtained by attaching $x$ to some edge of $T_{-x}$ is the same. In other words, a method such as maximum likelihood, which relies {\em only} on these scores is unable to decide the position of $x$. 

However, in Bayesian inference the position of $x$ may also be influenced by prior probabilities. In fact, we will now show that for the usual exponentially distributed prior on edge lengths used in phylogenetic inference, the posterior probability for different placements of $x$ in the tree is determined entirely by the length of the edge to which $x$ attaches; if these differ, then some placements of $x$ in this tree will have higher posterior probabilities. Ultimately this will affect posterior probabilities of trees on terraces in more complex partitioning schemes with partial taxon coverage.

Consider data generated by a reversible Markov process on a binary phylogenetic tree $T$ with branch lengths assigned, and analyzed under a Bayesian approach in which (i) all rooted binary phylogenetic trees have 
the same prior probability, and (ii)  an exponential prior applies independently across the branch lengths.
 Suppose we have a binary phylogenetic tree $T'$ that satisfies $T'_{-x} = T_{-x}$ (i.e. $T$ has the same or different placements of leaf $x$ in $T_{-x}$).  The following result describes the posterior probability of $T'$ (in the limit of long sequences) as the proportion of the total tree length of $T$ that comes from the edge of $T_{-x}$ to which $x$ would attach in order to produce $T'$.

\begin{theorem}
\label{thm2}
Let $T'$ be any binary phylogenetic $X$-tree  that agrees with $T$ up to the placement of taxon $x$, and consider a set of aligned sequences of length $k$ generated on $T$ with fixed branch lengths (under a standard reversible model) but with the sequence for taxon $x$ removed.
Then under conditions (i) and (ii) above, the Bayesian posterior probability of  $T'$ for this data converges in probability to $l(e_x)/\sum_{e }l(e)$ as the sequence length $k$ grows. Here, $l(e)$ is the length of edge $e$ in $T_{-x}$, the summation is over all edges of $T_{-x}$, and $e_x$ is the edge of $T_{-x}$ that $x$ must  attach to in order to produce the tree topology $T'$.

\end{theorem}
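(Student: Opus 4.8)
The plan is to reduce the statement to a fact about the posterior distribution of a single edge length in the reduced tree $T_{-x}$, and then feed in posterior consistency. First I would \textbf{collapse the data-less leaf}. For any $X$-tree $T'$ that attaches $x$ by a pendant edge of length $c$ to a point subdividing edge $e_x$ of $T_{-x}$ into pieces of lengths $a$ and $b$, the likelihood $\PP(D\mid T',\text{branch lengths})$ of the data $D$ (which contains no column for $x$) does not depend on $c$ and depends on $a,b$ only through $a+b$: in the pruning recursion a leaf with no data contributes the all-ones conditional-likelihood vector whatever its pendant edge length, and the resulting degree-two vertex can be suppressed using the semigroup identity $P(a)P(b)=P(a+b)$ for the substitution transition matrices. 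Hence $\PP(D\mid T',\text{branch lengths})=\PP(D\mid T_{-x},\mathbf{s})$, where $\mathbf{s}$ is the branch-length vector of $T_{-x}$ with the coordinate for $e_x$ equal to $a+b$ and the other coordinates inherited. Reversibility lets us ignore the location of the root throughout, so we may as well argue with the unrooted topologies.

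Next I would \textbf{integrate against the edge-length prior}. The pendant edge to $x$ integrates out to $1$. The two sub-edges of $e_x$ each carry an independent $\mathrm{Exp}(\lambda)$ prior, and the change of variables to their sum $u$ shows that the induced prior density on $u=s_{e_x}$ is $\lambda^2 u e^{-\lambda u}$, i.e. the original $\mathrm{Exp}(\lambda)$ density multiplied by the factor $\lambda u$. Therefore the marginal likelihood of $T'$ satisfies
\[
\PP(D\mid T')=\lambda\int s_{e_x}\,\PP(D\mid T_{-x},\mathbf{s})\,d\Pi(\mathbf{s}),
\]
where $\Pi$ is the product-exponential prior on the branch lengths of $T_{-x}$, and dividing by $\PP(D\mid T_{-x})=\int \PP(D\mid T_{-x},\mathbf{s})\,d\Pi(\mathbf{s})$ gives
\[
\frac{\PP(D\mid T')}{\PP(D\mid T_{-x})}=\lambda\,\EE[\,s_{e_x}\mid D,\,T_{-x}\,],
\]
the posterior mean of the length of edge $e_x$ under the (correct) topology $T_{-x}$.

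Finally I would \textbf{normalise over topologies and pass to the limit}. With a uniform prior on topologies, $\PP(T'\mid D)$ is $\PP(D\mid T')$ divided by $\sum_{\tau}\PP(D\mid\tau)$ over all $X$-trees $\tau$. I would split that sum into the $|E(T_{-x})|$ trees whose restriction to $\cL\setminus\{x\}$ is $T_{-x}$ — each of which is handled by the identity above with its own attachment edge $e$ — and all the others. Bayesian consistency of phylogenetic inference (the sequences being generated on $T_{-x}$ with strictly positive branch lengths) gives $\PP(D\mid\tau)/\PP(D\mid T_{-x})\to 0$ in probability whenever the restriction of $\tau$ is not $T_{-x}$, so that second group is asymptotically negligible. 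Dividing numerator and denominator by $\PP(D\mid T_{-x})$ and using the identity for each surviving term, $\PP(T'\mid D)$ converges in probability to $\lambda\,l(e_x)\big/\sum_{e}\lambda\,l(e)=l(e_x)\big/\sum_{e}l(e)$, provided each posterior mean $\EE[s_e\mid D,T_{-x}]$ converges in probability to the true length $l(e)$, which is exactly posterior consistency for the branch-length parameters under the fixed correct topology.

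The conceptual core — the collapse in the first paragraph and the $\mathrm{Gamma}$-versus-$\mathrm{Exponential}$ convolution in the second, which is what manufactures the weighting factor $\lambda s_{e_x}$ — is short. The real work is in the last step: justifying that topologies inconsistent with $T_{-x}$ contribute a vanishing fraction of the total marginal likelihood, and that the posterior mean of each edge length concentrates at its true value (the latter also needs a uniform-integrability argument, so that convergence of the posterior distribution upgrades to convergence of the posterior mean that actually appears above). Both are standard statements of Bayesian consistency for phylogenetics, but they presuppose the usual identifiability and regularity hypotheses on the ``standard reversible model'' and positivity of the generating branch lengths, so the honest proof must either cite those results or re-establish them under explicitly stated conditions; that is where I expect the main difficulty to lie.
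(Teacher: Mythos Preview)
Your argument is correct and follows a genuinely different route from the paper's proof. You and the paper share the same opening observation---pruning the data-less leaf reduces the likelihood to one on $T_{-x}$, and integrating out the two i.i.d.\ exponential sub-edges produces the extra factor $\lambda s_{e_x}$ (the Gamma-versus-Exponential convolution). From there the two proofs diverge. You rewrite
\[
\frac{\PP(D\mid T')}{\PP(D\mid T_{-x})}=\lambda\,\EE[\,s_{e_x}\mid D,\,T_{-x}\,],
\]
reduce the posterior of $T'$ to a ratio of such posterior means (plus an asymptotically negligible remainder from wrong topologies), and then invoke two black boxes: Bayesian posterior consistency for the topology on $X\setminus\{x\}$, and posterior concentration of branch lengths at their true values (upgraded to convergence of means via uniform integrability). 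The paper instead carries out a direct Laplace-type analysis: it writes both numerator and denominator as $\int_\Gamma \exp\{-k\,d_{KL}(\hat{s}\,\|\,p(l))\}\,g(l)\,dl$, localises the mass to a shrinking $k^{-1/4}$-ball around the true branch-length vector using Wilks' theorem, Pinsker's inequality, and a path-metric bound, and reads off the limiting ratio as $g'(\lambda_{-x})/g(\lambda_{-x})=l(e'_{-x})/l(e_{-x})$. In effect, the paper \emph{proves} the posterior-concentration statements you cite, tailored to exactly what is needed.

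What each buys: your route is short, conceptually transparent, and makes clear that the phenomenon is nothing more than standard Bayesian asymptotics interacting with the Gamma factor---but it is only as rigorous as the consistency results you import, and those require identifiability and regularity hypotheses that must be stated and matched to the ``standard reversible model'' clause. The paper's route is self-contained and spells out the analytic estimates explicitly, at the cost of a longer and more technical argument. One small point to watch if you develop your version: when you dismiss topologies $\tau$ with $\tau_{-x}\neq T_{-x}$, note that $\PP(D\mid\tau)/\PP(D\mid T_{-x})=\lambda\,\EE[s_{e_{\tau,x}}\mid D,\tau_{-x}]\cdot\PP(D\mid\tau_{-x})/\PP(D\mid T_{-x})$, so you need the posterior mean edge length under the \emph{wrong} topology not to blow up faster than the Bayes factor decays; this is easily handled (e.g., by bounding the posterior mean via the exponential prior, or by appealing directly to posterior consistency on the full $X$-tree model), but should be said.
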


The proof of this result is presented in the Appendix. Intuitively, the theorem says that the relative chances of a leaf taxon attaching to two alternative edges of a tree are (for long sequences) given by their relative edge lengths: the longer edge will ``attract'' the taxon with missing data. This result may seem like something of a curiosity for a trivial partitioning scheme consisting of just a single locus, and a leaf taxon with no data, but it begins to explain why the posterior probability of trees on a terrace may be different even if their likelihoods are the same. In particular it pinpoints the important role that edge lengths can play in influencing subsets of the trees on a terrace to have higher posterior probabilities, as seen in the example in Figure ~\ref{fig5}. 

\subsection{5.  Extensions to linked models}

Our final result extends results on terraces to the setting in which there is some linkage among edge rate parameters across the loci in a data partition. We describe an example in which conditions are present such that terraces would occur for ML-EUL inference, but when an EL model is used instead, the accuracy of phylogenetic inference is negatively affected. Specifically, the relative likelihood scores can be re-ordered so that an incorrect tree is favored, much as in the Bayesian case described above.

 			\begin{figure}[!h]
\centering
\includegraphics[scale=0.66]{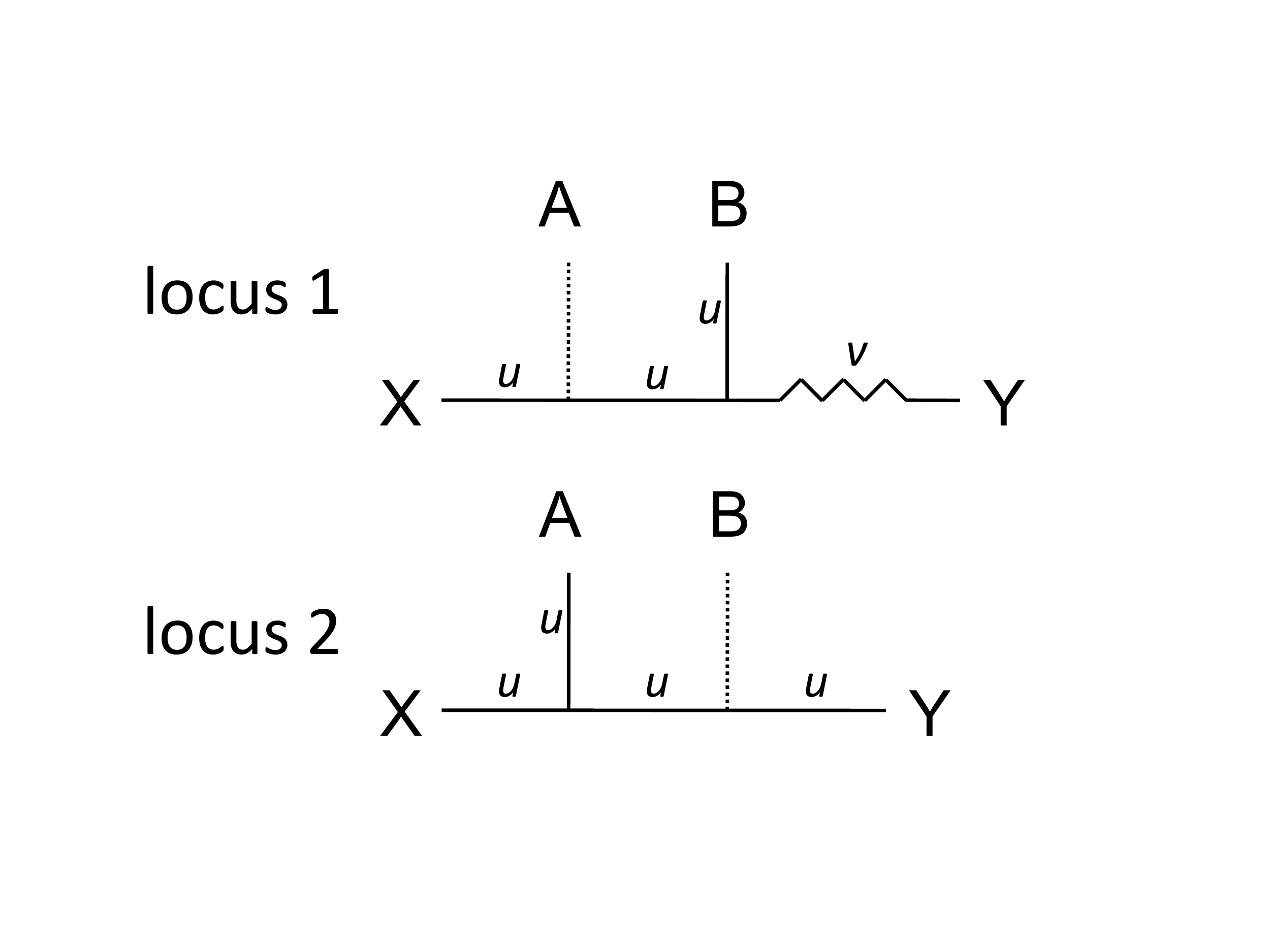}
\caption{ A simple model of heterotachy for two loci. Trees labeled 1 and 2 are annotated with edge length parameters for the two loci in the partition. Taxon coverage as in Fig.~\ref{fig2}. All edge lengths are $u$ for loci 1 and 2, except the outlier edge labeled $v$ for locus 1. }
\label{fig6}
\end{figure}

The example we describe involves sequences that exhibit heterotachy; that is, the pattern of edge length variation among edges is different between loci. We describe a simple case of this in a 4-taxon data set for two loci. Assume all edges for both loci have evolved with the same rate, $u=0.1$ substitutions/site, except for a single terminal edge for one of the loci, which has rate $v$, ranging from 0.1 to 1.0 substitution/site (Fig.~\ref{fig6}). Each locus in the partition has 5000 sites and sequences are evolved in Seq-Gen \citep{Ramb97} with the same rate matrix used in previous examples. The pattern of partial taxon coverage shown (Fig.~\ref{fig6}) induces a stand consisting of all three of the binary trees possible for four taxa. Thus, with MP or ML using EUL inference, the optimality scores of all three trees are the same. We examined two EL models. The first (``homotachy") model assumes the substitution rate matrix is the same for both loci and the edge rate parameters are the same for both loci; the second (``proportional") model assumes that the substitution rate matrix is the same for both loci but that edge rates for all edges at the second locus are strictly proportional to the corresponding edge rates for the first locus. This allows rate variation between loci, but is not ``heterotachy" per se. Likelihood calculations were carried out using PAUP* 4.0 (the ``proportional" model implemented with the ``siterates" command: \citealp{paup}). Under these conditions the likelihood scores for the three possible binary trees are different for both models (Fig.~\ref{fig7}). In particular, if $v$ is more than about $4u$, then an incorrect tree has the highest likelihood (Fig.~\ref{fig7}). In other words, use of an EL model for inference does not just avoid the ambiguity of the terrace phenomenon that would have been seen using EUL inference, it is positively misleading. Under these conditions, using an EUL model for inference and dealing with the resulting ambiguity might be a preferable (if more conservative) procedure.

			\begin{figure}[!h]
\centering
\includegraphics[scale=0.75]{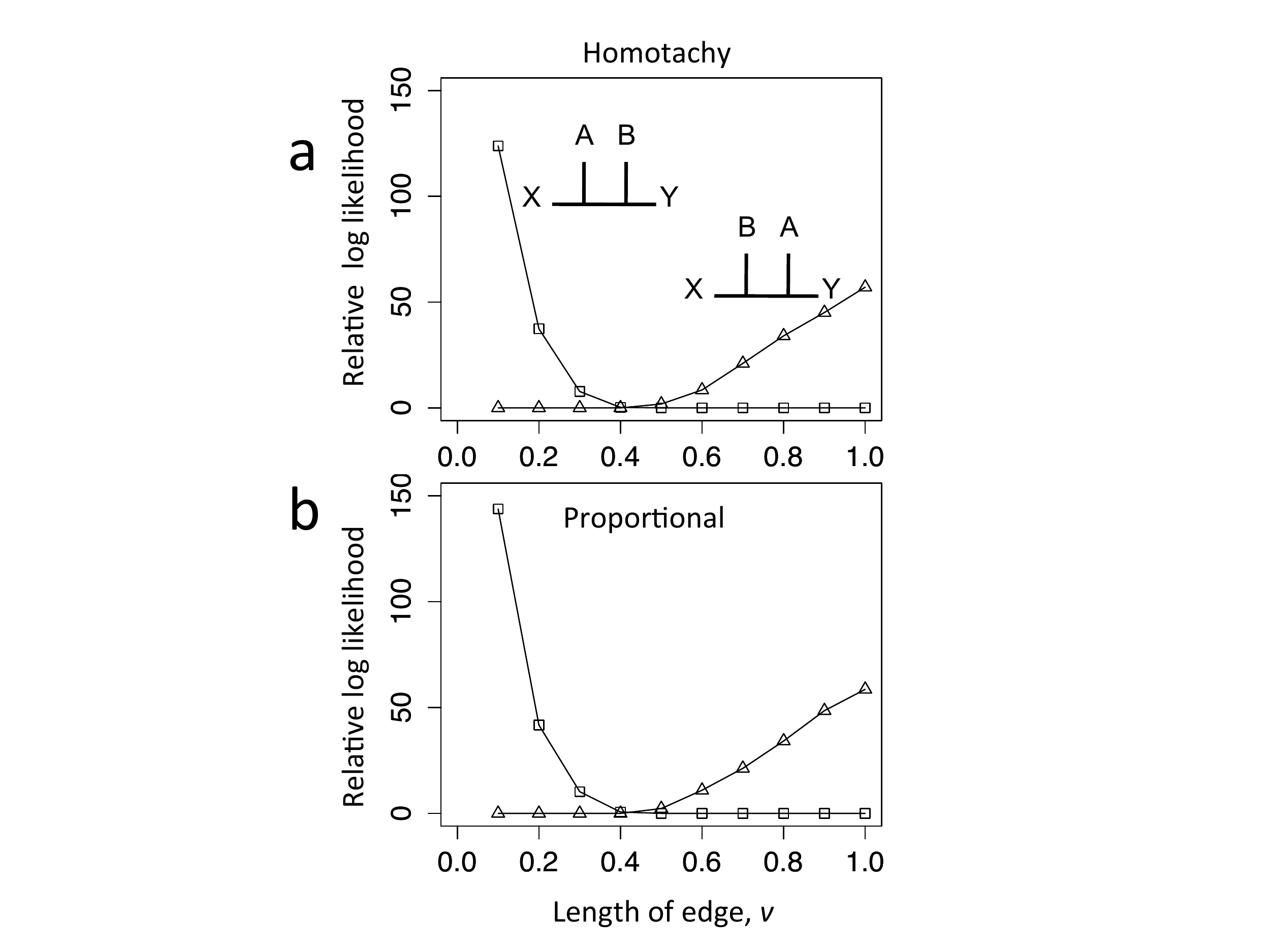}
\caption{ Likelihood scores of the best and second best trees relative to the worst tree as a function of edge length in the minimal model of heterotachy described in Fig.~\ref{fig6}. Worst tree is always the XY$\mid$AB tree. A. Inference model assumes the same edge parameters for both loci (homotachy). B. Inference model assumes edge parameters are 
proportional between loci (as implemented in the SiteRates option in PAUP* 4). Open squares refer to relative likelihood of AX$\mid$BY tree; open triangles refer to the (incorrect) BX$\mid$AY tree.}
\label{fig7}
\end{figure}  
 
This case clearly involves model misspecification, since the EL model described above is not the model actually generating the sequences. Interestingly, if we ``fix" the data set by filling in the missing entries with data generated under our heterotachous model, the correct ranking is restored under EL inference, even with this model misspecification. By the same token, eliminating heterotachy by setting $u=v$ in the generating model, but retaining missing data according to the specified partial coverage pattern, also restores the correct ranking of trees when using the EL model. Thus, the re-ranking of trees we observe in this example is not a product of either partial coverage alone or heterotachy alone. Both are required, and the correct ranking of the three trees' likelihood scores can be rescued by eliminating one or the other.

Although these results imply strongly that ML inference can be statistically inconsistent (i.e., converging to the wrong tree) under the combination of missing data and heterotachy, we have not been able to formally prove this in any sort of generality. However, some aspects of the converse are provable. For data evolving according to our heterotachy model, both parsimony and likelihood inference using a homotachy model are statistically consistent.  The proof that parsimony will be consistent for the heterotachy model with all taxa present, and the branch lengths indicated (for any $u, v$) follows from a straightforward application of Theorem 8.7.1 in \citet{sem03b}, which shows that parsimony is consistent for tree 1 and for tree 2 with those branch lengths (for any $u$ and also any $v$). Since parsimony is a linear
scoring scheme, if it is consistent on each block of a partition it is consistent on all the data. 

To prove the same for likelihood, consider first a single site -- if in any tree just one branch length is changed, say a pendant branch length from $u$ to $v$, and if $z$ is the state at this pendant leaf, $x$ the state at the other end (interior node) of this edge, and $Y$ is the collective set of states at the rest of the tree,
then by the Markov property, if proportion $\alpha$ of sites evolve under tree 1 (and $1-\alpha$ under tree 2) then for the heterotachy model we have

 \begin{equation}
\label{star1}
 Pr(Y,z|x) = \alpha Pr_u(Y,z|x) + (1-\alpha) Pr_v(Y,z|x).
\end{equation}
where $Pr_s(Y,z|x)$ denotes the conditional probability of leaf states $Y$ and $z$, given that interior node is $x$, and $s$ is the length of the pendant edge in question.

However, if we set the length of this pendant edge to $\theta$, where (for the 2-state symmetric model)

\begin{equation}
\label{star2}
\exp(-2\theta) = \alpha \exp(-2u) + (1-\alpha)\exp(-2v)
\end{equation}
then, from (\ref{star1})

$$Pr(Y,z|x)=Pr_\theta(Y,z|x)$$
for all $x,y, Z$, and so

$$Pr(Y,z)=Pr_\theta (Y,z)$$
and in particular, $Pr(Y,z)$ (=site pattern distribution under the heterotachy model) matches exactly the probability distribution of a homotachy model in which the length of the pendant edge is a particular value $\theta$ chosen  by (\ref{star2}) and lying somewhere between $u$ and $v$.  In other words, this heterotachy model produces site pattern probabilities the same as a homotachy model on the same topology and with an intermediate branch length.
These results are based on a single site, but it follows by Theorem 5.1 of \cite{cha96}, that if we now use ML (on sequences from the model) under the homotachy model this will be a statistically consistent estimator of tree topology.

\section{Discussion}

The new results on terraces presented here are relevant to an increasingly common setting for phylogenetic analysis in which large multi-locus data sets with significant numbers of missing sequences are combined with models that partition the data in various ways during likelihood or Bayesian inference. They are also relevant to analyses using maximum parsimony irrespective of partitioning schemes, which remains a computationally attractive method for analyses of large data sets (e.g., \citealp{Yoder13}). Previously we had shown that the terraces of equally optimal trees surrounding any specific tree could be quite large, although it was possible to exploit certain mathematical results on subtrees and supertrees to help characterize these sets of trees. In the present paper we found it useful to define ``stands" of trees and distinguish between stands and terraces. Stands are collections of trees that all display the subtrees associated with the separate blocks within a partitioned data set. Under some conditions the trees in a stand can all have the same optimality score, in which case they are called terraces, to reflect their constant ``elevation" along a vertical axis corresponding to that score.

If the score is the tree's parsimony score, a stand is always a terrace. The first new result we described specifies conditions under which this will also be true for the likelihood score. We showed that a sufficient condition was that the parameters describing edge lengths in different partition blocks must be independent (``edge unlinked"). Previously we had  also assumed parameters of the rate matrix to be unlinked \citep{san11}. This is sufficient but not necessary. Because practice in current phylogenetics of multi-locus super matrices ranges from using completely linked models to highly partitioned ones that unlink both rate matrices and edge parameters, impacts of these divergent strategies should be assessed. Our results indicate it is possible that edge unlinked models can  induce potentially large terraces around any given tree found in a reconstruction. This raises the deeper question of whether this ambiguity accurately reflects phylogenetic uncertainty or is best ameliorated by adjustments to the model (setting aside the possibility of simply filling in the missing data).

The second new result indicates that the stand associated with one partitioning scheme can actually be part of a larger stand under a different partitioning scheme. For parsimony, this leads to strong conclusions: it is possible to easily identify the extent of the largest terrace associated with any given partitioning scheme simply by constructing the so-called maximal partitioning scheme for the data set (a simple product of the distribution of missing data in the matrix), and then characterizing the terrace for that. This places a useful upper bound on the extent of the ambiguity associated with partial taxon coverage, and therefore should serve as the default terrace reported for any specific tree when parsimony is the optimality score.  

These results are fundamentally based on the properties of stands. However, in considering the optimality scores of the trees on a stand, it is worth remembering that even a ``maximal'' terrace may be imbedded in a yet larger collection of trees with the same optimality score owing to homoplasy, constant sites, or possibly other factors in the data not related to partial taxon coverage. The maximal number of trees on a terrace provides a lower bound on the actual number of equally parsimonious trees for $\cT$, and it can be calculated without reference to the sequence data proper, $D$. This may be relevant for tree search heuristics, which are clearly challenged by tree landscapes such as these \citep{gol}. For example, rather than doing tree rearrangements to find equally parsimonious trees, it should be possible to enumerate them directly. These can form ``seeds'' from which to continue searching using expensive tree score computations. This applies to likelihood inference as well, conditional on the partitioning scheme.

The remaining new results address various aspects of the impact of terraces on assessing tree accuracy. The third and fourth results refer to situations in which  bootstrap proportions and Bayesian posterior probabilities, respectively, are determined in large part not by the information con tent of the sequence data but by the pattern of partial taxon coverage. These happen for different reasons. In bootstrapping, the trees sampled from a terrace end up reflecting the frequencies of trees on that terrace, which in turn reflects the pattern of taxon coverage and tree shape. One consequence is that a clade can be highly supported in such an analysis because most trees on the terrace (all with equal likelihood scores) have that clade. It is possible to view this as either a ``feature" or a ``bug", but the bottom line is that one might not be expecting such sensitivity to arise from factors other than the sequence data. Simmons and colleagues \citep{SimFreu11,Sim12b,Sim14b} have noted a number of impacts of missing data generally on bootstrap estimates and have been critical of its application in this context, suggesting several strategies to check well supported clades for spurious support. We suspect some but not all of their observations in real data sets (e.g., \citealp{Sim14b}) are due to issues related to terraces, but other factors involving homoplasy and phylogenetic signal proper are no doubt also involved. 

Similarly, Bayesian inference can be heavily influenced by posterior probabilities not being the same across trees on a terrace (again, despite the likelihood being the same). In this case, the result is largely explained by the influence of priors. With standard exponentially distributed prior probabilities on edge lengths, the probability that a single taxon with all its data missing will attach to a particular edge of the tree is determined by that edge length in relation to the length of the entire tree. Absent any other information, Bayesian inference will place a taxon on the longest edge of the tree (though not necessarily with high probability). Again, this may be viewed as desirable or not, but it has the downstream consequence that trees in a stand can have very different posterior probabilities, which are determined by a fairly opaque convolution of the priors, edge lengths, partial coverage patterns, and combinatorics of terraces and tree shapes.

A reasonable rejoinder to some of these concerns is that one should simply avoid edge-unlinked partitioning schemes in likelihood or Bayesian inference (and maximum parsimony) and enforce some more homogeneous model during statistical inference. Of course, this kind of underparameterization may cause its own problems, but we pursued briefly whether it at least can ameliorate some of the issues we have described with terraces. The results presented in the last section suggest reasons to proceed carefully. When sequence data are generated on a tree with even a small amount of heterotachy between loci, an edge-linked model is used for inference, and there is only partial taxon coverage, then the likelihood score for the correct tree can actually be worse than for an incorrect tree. Simulations with long sequence lengths imply this is an instance of statistical inconsistency. As stated this might seem like simply another case in which model misspecification causes problems, but in this instance, it is clearly a negative interaction between model misspecification and missing data, because the problem can be avoided by fixing either aspect. 

Much further work is needed to narrow down the precise effects of terraces on confidence estimation and accuracy of inference. In the meantime, however, there are options for remediation. Setting aside the strategy of acquiring the missing sequences to eliminate partial taxon coverage entirely, which may be expensive or impossible depending on availability of DNA samples, there are computationally promising approaches. Previously we posed the ``maximum defining label set" (MDLS) problem \citep{san11}, in which we seek the smallest number of leaf taxa to delete from the coverage matrix such that the stand for a given tree is reduced to a single tree. This has an exact and efficient solution for two loci, and experiments with data sets indicate there are interesting instances in which elimination of relatively few taxa can solve the problem, even while leaving a significant amount of missing data. However, there is no known exact solution to the case of three or more loci. The good news is that simple heuristics in the case of three or more loci can find solutions that eliminate terraces \citep{san11}; they  just may not be optimal (it may have been possible to do the same thing and keep more taxa in the matrix).   

To highlight the impacts clearly, most of our results were in the context of sequence data sets so large that the only error was due to partial taxon coverage. In real data sets there is also error from the finite sample taken from the substitution process. This translates into a broadening of the bootstrap or posterior distribution of trees. In addition, there may be distinct terraces associated with each sample tree taken from these distributions; and there may be a distinct MDLS solution for deleting some set of taxa for each of these trees. How do we integrate across this information to make headway in reducing the overall impact of terraces?  A simple but conservative fix might be to replace any sampled tree in a bootstrap replicate or an MCMC run with the strict consensus of the maximal stand in which that tree is imbedded. Then any clade on that tree is present in all trees on the terrace. Although this would tend to reduce the false positive clades uncovered in an analysis, it might lack sensitivity. Clades present at an overwhelmingly high frequency on the terrace but not quite 100\% will be missed by a strict consensus.

Another approach would be to rely on other sorts of summaries about trees than consensus. For example, a terrace could be characterized by the average dissimilarity among its trees, based on a measure of distance between trees, such as the Robinson-Foulds (RF: \citealp{RF81}) distance. Then, perhaps a more synthetic assessment of a confidence set of trees sampled from bootstrap replicates or posterior distributions could be to explore the ``distances" between terraces, using a measure of distance between sets, such as the Hausdorff distance \citep{Yu14}. The Hausdorff distance is small when each tree on one terrace is close (measured here by RF distance) to some tree on the other terrace. This might be true even if the average RF distance between trees {\em within} a terrace is much larger. Whether this approach ultimately proves promising or not, some means to characterize more fully the relationships between entire sets of trees seems to be a necessity when terraces are commonplace.

\section{Acknowledgments} 
This work was supported by the US National Science Foundation (DEB-1353815 to MJS, DJZ, AS, MMM, and MS) and the New Zealand Marsden Fund (to MS). We thank J. Charboneau and M. Simmons for discussion.

\bibliographystyle{sysbio}
\bibliography{TerracesMs}

\begin{thebibliography}{72}
\providecommand{\natexlab}[1]{#1}
\providecommand{\selectlanguage}[1]{\relax}
\providecommand{\bibAnnoteFile}[1]{%
  \IfFileExists{#1}{\begin{quotation}\noindent\textsc{Key:} #1\\
  \textsc{Annotation:}\ \input{#1}\end{quotation}}{}}
\providecommand{\bibAnnote}[2]{%
  \begin{quotation}\noindent\textsc{Key:} #1\\
  \textsc{Annotation:}\ #2\end{quotation}}

\bibitem[{Aho et~al.(1981)Aho, Sagiv, Szymanski, and Ullman}]{aho81}
Aho, A., Y.~Sagiv, T.~G. Szymanski, and J.~Ullman. 1981. Inferring a tree from
  lowest common ancestors with an application to the optimization of relational
  expressions. SIAM J. Comput. 10:405--421.
\bibAnnoteFile{aho81}

\bibitem[{Boettiger et~al.(2012)Boettiger, Coop, and Ralph}]{boe12}
Boettiger, C., G.~Coop, and P.~Ralph. 2012. Is your phylogeny informative?
  {M}easuring the power of comparative methods. Evolution 66:2240--2251.
\bibAnnoteFile{boe12}

\bibitem[{Bordewich(2003)}]{bor03}
Bordewich, M. 2003. The complexity of counting and randomized approximation.
  Thesis University of Oxford.
\bibAnnoteFile{bor03}

\bibitem[{Bryant(1997)}]{bry}
Bryant, D. 1997. Building trees, hunting for trees, and comparing trees: theory
  and methods in phylogenetic analysis. Thesis University of Canterbury.
\bibAnnoteFile{bry}

\bibitem[{Burleigh et~al.(2009)Burleigh, Hilu, and Soltis}]{bur09}
Burleigh, J.~G., K.~W. Hilu, and D.~E. Soltis. 2009. Inferring phylogenies with
  incomplete data sets: a 5-gene, 567-taxon analysis of angiosperms. BMC Evol.
  Biol. 9.
\bibAnnoteFile{bur09}

\bibitem[{Chamberlain et~al.(2012)Chamberlain, Hovick, Dibble, Rasmussen,
  Van~Allen, Maitner, Ahern, Bell-Dereske, Roy, Meza-Lopez, Carrillo, Siemann,
  Lajeunesse, and Whitney}]{cha12}
Chamberlain, S.~A., S.~M. Hovick, C.~J. Dibble, N.~L. Rasmussen, B.~G.
  Van~Allen, B.~S. Maitner, J.~R. Ahern, L.~P. Bell-Dereske, C.~L. Roy,
  M.~Meza-Lopez, J.~Carrillo, E.~Siemann, M.~J. Lajeunesse, and K.~D. Whitney.
  2012. Does phylogeny matter? {A}ssessing the impact of phylogenetic
  information in ecological meta-analysis. Ecol. Lett. 15:627--636.
\bibAnnoteFile{cha12}

\bibitem[{Chang(1996)}]{cha96}
Chang, J. 1996. Full reconstruction of {M}arkov models on evolutionary trees:
  identifiability and consistency. Math. Biosci. 137:51--73.
\bibAnnoteFile{cha96}

\bibitem[{Cho et~al.(2011)Cho, Zwick, Regier, Mitter, Cummings, Yao, Du, Zhao,
  Kawahara, Weller, Davis, Baixeras, Brown, and Parr}]{cho11}
Cho, S., A.~Zwick, J.~C. Regier, C.~Mitter, M.~P. Cummings, J.~X. Yao, Z.~L.
  Du, H.~Zhao, A.~Y. Kawahara, S.~Weller, D.~R. Davis, J.~Baixeras, J.~W.
  Brown, and C.~Parr. 2011. Can deliberately incomplete gene sample
  augmentation improve a phylogeny estimate for the advanced moths and
  butterflies ({H}exapoda: {L}epidoptera)? Syst. Biol. 60:782--796.
\bibAnnoteFile{cho11}

\bibitem[{Christin et~al.(2013)Christin, Osborne, Chatelet, Columbus, Besnard,
  Hodkinson, Garrison, Vorontsova, and Edwards}]{chr13}
Christin, P.-A., C.~P. Osborne, D.~S. Chatelet, J.~T. Columbus, G.~Besnard,
  T.~R. Hodkinson, L.~M. Garrison, M.~S. Vorontsova, and E.~J. Edwards. 2013.
  Anatomical enablers and the evolution of {C}-4 photosynthesis in grasses. P.
  Natl. Acad. Sci. USA 110:1381--1386.
\bibAnnoteFile{chr13}

\bibitem[{Constantinescu(1995)}]{con95}
Constantinescu, M. 1995. An efficient algorithm for supertrees. J. Classif.
  12:101--112.
\bibAnnoteFile{con95}

\bibitem[{Constantinescu and Sankoff(1986)}]{con86}
Constantinescu, M. and D.~Sankoff. 1986. Tree enumeration modulo a consensus.
  Journal of Classif. 3:349--356.
\bibAnnoteFile{con86}

\bibitem[{Cover and Thomas(2006)}]{cov06}
Cover, T.~M. and J.~A. Thomas. 2006. Elements of Information Theory, 2nd
  edition. Willey-Interscience.
\bibAnnoteFile{cov06}

\bibitem[{Crawley and Hilu(2012)}]{cra12}
Crawley, S.~S. and K.~W. Hilu. 2012. Impact of missing data, gene choice, and
  taxon sampling on phylogenetic reconstruction: the {C}aryophyllales
  ({A}ngiosperms). Plant Syst. Evol. 298:297--312.
\bibAnnoteFile{cra12}

\bibitem[{Davis et~al.(2004)Davis, Stevenson, Petersen, Seberg, Campbell,
  Freudenstein, Goldman, Hardy, Michelangeli, Simmons, Specht, Vergara-Silva,
  and Gandolfo}]{dav04}
Davis, J.~I., D.~W. Stevenson, G.~Petersen, O.~Seberg, L.~M. Campbell, J.~V.
  Freudenstein, D.~H. Goldman, C.~R. Hardy, F.~A. Michelangeli, M.~P. Simmons,
  C.~D. Specht, F.~Vergara-Silva, and M.~Gandolfo. 2004. A phylogeny of the
  monocots, as inferred from rbc{L} and atp{A} sequence variation, and a
  comparison of methods for calculating jackknife and bootstrap values. Syst.
  Bot. 29:467--510.
\bibAnnoteFile{dav04}

\bibitem[{Davis et~al.(2013)Davis, Midford, and Maddison}]{dav13}
Davis, M.~P., P.~E. Midford, and W.~Maddison. 2013. Exploring power and
  parameter estimation of the {B}i{SSE} method for analyzing species
  diversification. BMC Evol. Biol. 13:11.
\bibAnnoteFile{dav13}

\bibitem[{Day(1985)}]{day85}
Day, W. H.~E. 1985. Optimal algorithms for comparing trees with labeled leaves.
  J. Classif. 2:7--28.
\bibAnnoteFile{day85}

\bibitem[{DiCiccio and Efron(1996)}]{efr96}
DiCiccio, T. and B.~Efron. 1996. Bootstrap confidence intervals. Stat. Sci.
  11:189--228.
\bibAnnoteFile{efr96}

\bibitem[{Driskell et~al.(2004)Driskell, An\'{e}, Burleigh, McMahon, O’Meara,
  and Sanderson}]{dri04}
Driskell, A.~C., C.~An\'{e}, J.~G. Burleigh, M.~M. McMahon, B.~O’Meara, and
  M.~J. Sanderson. 2004. Prospects for building the tree of life from large
  sequence databases. Science 306:1172--1174.
\bibAnnoteFile{dri04}

\bibitem[{Fabre et~al.(2012)Fabre, Hautier, Dimitrov, and Douzery}]{fab12}
Fabre, P.-H., L.~Hautier, D.~Dimitrov, and E.~J.~P. Douzery. 2012. A glimpse on
  the pattern of rodent diversification: a phylogenetic approach. BMC Evol.
  Biol. 12.
\bibAnnoteFile{fab12}

\bibitem[{Felsenstein(1985)}]{fel85}
Felsenstein, J. 1985. Confidence limits on phylogenies: an approach using the
  bootstrap. Evolution 39:783--791.
\bibAnnoteFile{fel85}

\bibitem[{Goldberg and Igic(2012)}]{gol12}
Goldberg, E.~E. and B.~Igic. 2012. Tempo and mode in plant breeding system
  evolution. Evolution 66:3701--3709.
\bibAnnoteFile{gol12}

\bibitem[{Goloboff(2014)}]{gol}
Goloboff, P. 2014. Hide and vanish: Data sets where the most parsimonious tree
  is known but hard to find, and their implications for tree search methods.
  Mol. Phyl. Evol 79:118--131.
\bibAnnoteFile{gol}

\bibitem[{Gordon(1986)}]{gor86}
Gordon, A.~D. 1986. Consensus supertrees: the synthesis of rooted trees
  containing overlapping sets of labeled leaves. J. Classif. 3:31--39.
\bibAnnoteFile{gor86}

\bibitem[{Hedin et~al.(2012)Hedin, Starrett, Akhter, Schoenhofer, and
  Shultz}]{hed12}
Hedin, M., J.~Starrett, S.~Akhter, A.~L. Schoenhofer, and J.~W. Shultz. 2012.
  Phylogenomic resolution of paleozoic divergences in {H}arvestmen
  ({A}rachnida, {O}piliones) via analysis of next-generation transcriptome
  data. {PLOS} One 7.
\bibAnnoteFile{hed12}

\bibitem[{Hejnol et~al.(2009)Hejnol, Obst, Stamatakis, Ott, Rouse, Edgecombe,
  Martinez, Baguna, Bailly, Jondelius, Wiens, Muller, Seaver, Wheeler,
  Martindale, Giribet, and Dunn}]{hej09}
Hejnol, A., M.~Obst, A.~Stamatakis, M.~Ott, G.~W. Rouse, G.~D. Edgecombe,
  P.~Martinez, J.~Baguna, X.~Bailly, U.~Jondelius, M.~Wiens, W.~E.~G. Muller,
  E.~Seaver, W.~C. Wheeler, M.~Q. Martindale, G.~Giribet, and C.~W. Dunn. 2009.
  Assessing the root of bilaterian animals with scalable phylogenomic methods.
  P. Roy. Soc. Lond. B Bio. 276:4261--4270.
\bibAnnoteFile{hej09}

\bibitem[{Hess and Goldman(2011)}]{hes11}
Hess, J. and N.~Goldman. 2011. Addressing inter-gene heterogeneity in maximum
  likelihood phylogenomic analysis: {Y}easts revisited. {PLOS} One 6.
\bibAnnoteFile{hes11}

\bibitem[{Hinchliff and Roalson(2013)}]{hin13}
Hinchliff, C.~E. and E.~H. Roalson. 2013. Using supermatrices for phylogenetic
  inquiry: {A}n example using the sedges. Syst. Biol. 62:205--219.
\bibAnnoteFile{hin13}

\bibitem[{Izquierdo-Carrasco et~al.(2011)Izquierdo-Carrasco, Smith, and
  Stamatakis}]{izq11}
Izquierdo-Carrasco, F., S.~A. Smith, and A.~Stamatakis. 2011. Algorithms, data
  structures, and numerica for likelihood-based phylogenetic inference of huge
  trees. BMC Bioinf. 12:470.
\bibAnnoteFile{izq11}

\bibitem[{Kearney(2002)}]{kea02}
Kearney, M. 2002. Fragmentary taxa, missing data, and ambiguity: {M}istaken
  assumptions and conclusions. Syst. Biol. 51:369--381.
\bibAnnoteFile{kea02}

\bibitem[{Lemmon et~al.(2009)Lemmon, Brown, Stanger-Hall, and Lemmon}]{lem09}
Lemmon, A.~R., J.~M. Brown, K.~Stanger-Hall, and E.~M. Lemmon. 2009. The effect
  of ambiguous data on phylogenetic estimates obtained by maximum likelihood
  and bayesian inference. Syst. Biol. 58:130--145.
\bibAnnoteFile{lem09}

\bibitem[{Letsch et~al.(2012)Letsch, Meusemann, Wipfler, Schuette, Beutel, and
  Misof}]{let12}
Letsch, H.~O., K.~Meusemann, B.~Wipfler, K.~Schuette, R.~Beutel, and B.~Misof.
  2012. Insect phylogenomics: results, problems and the impact of matrix
  composition. P. Roy. Soc. Lond. B Bio. 279:3282--3290.
\bibAnnoteFile{let12}

\bibitem[{Liu et~al.(2012)Liu, Linder, and Warnow}]{liu12}
Liu, K., C.~R. Linder, and T.~Warnow. 2012. {RA}x{ML} and {F}ast{T}ree:
  {C}omparing two methods for large-scale maximum likelihood phylogeny
  estimation. PLOS One 6.
\bibAnnoteFile{liu12}

\bibitem[{Maddison(1991)}]{mad91}
Maddison, D.~R. 1991. The discovery and importance of multiple islands of
  most-parsimonious trees. Syst. Zool. 40:315--328.
\bibAnnoteFile{mad91}

\bibitem[{Marazzi et~al.(2012)Marazzi, An\'{e}, Simon, Delgado-Salinas, Luckow,
  and Sanderson}]{mar12}
Marazzi, B., C.~An\'{e}, M.~F. Simon, A.~Delgado-Salinas, M.~Luckow, and M.~J.
  Sanderson. 2012. Locating evolutionary precursors on a phylogenetic tree.
  Evolution 66:3918--3930.
\bibAnnoteFile{mar12}

\bibitem[{Moulton and Steel(1999)}]{mou99}
Moulton, V. and M.~A. Steel. 1999. Retractions of finite distance functions
  onto tree metrics. Discr. Appl. Math. 91:215--233.
\bibAnnoteFile{mou99}

\bibitem[{Pyron and Wiens(2011)}]{pyr11}
Pyron, R.~A. and J.~J. Wiens. 2011. A large-scale phylogeny of {A}mphibia
  including over 2800 species, and a revised classification of extant frogs,
  salamanders, and caecilians. Mol. Phylogenet. Evol. 61:543--583.
\bibAnnoteFile{pyr11}

\bibitem[{Rabosky et~al.(2013)Rabosky, Santini, Eastman, Smith, Sidlauskas,
  Chang, and Alfaro}]{rab13}
Rabosky, D.~L., F.~Santini, J.~Eastman, S.~A. Smith, B.~Sidlauskas, J.~Chang,
  and M.~E. Alfaro. 2013. Rates of speciation and morphological evolution are
  correlated across the largest vertebrate radiation. Nat. Commun. 4.
\bibAnnoteFile{rab13}

\bibitem[{Rambaut and Grassly(1997)}]{Ramb97}
Rambaut, A. and N.~C. Grassly. 1997. Seq-gen: An application for the monte
  carlo simulation of dna sequence evolution along phylogenetic trees. Cabios
  13:235--238.
\bibAnnoteFile{Ramb97}

\bibitem[{Robinson and Foulds(1981)}]{RF81}
Robinson, D.~F. and L.~R. Foulds. 1981. Comparison of phylogenetic trees.
  Mathematical Biosciences 53:131--147.
\bibAnnoteFile{RF81}

\bibitem[{Ronquist et~al.(2012)Ronquist, Teslenko, van~der Mark, Ayres,
  Darling, Hohna, Larget, Liu, Suchard, and Huelsenbeck}]{ron12}
Ronquist, F., M.~Teslenko, P.~van~der Mark, D.~L. Ayres, A.~Darling, S.~Hohna,
  B.~Larget, L.~Liu, M.~A. Suchard, and J.~P. Huelsenbeck. 2012. Mr{B}ayes 3.2:
  {E}fficient {B}ayesian phylogenetic inference and model choice across a large
  model space. Syst. Biol. 61:539--542.
\bibAnnoteFile{ron12}

\bibitem[{Roure et~al.(2013)Roure, Baurain, and Philippe}]{rou13}
Roure, B., D.~Baurain, and H.~Philippe. 2013. Impact of missing data on
  phylogenies inferred from empirical phylogenomic datasets. Mol. Biol. Evol.
  30:197--214.
\bibAnnoteFile{rou13}

\bibitem[{Sabir et~al.(2014)Sabir, Schwarz, Ellison, Zhang, Baeshen, Mutwakil,
  Jansen, and Ruhlman}]{Sabir14}
Sabir, J., E.~Schwarz, N.~Ellison, J.~Zhang, N.~A. Baeshen, M.~Mutwakil,
  R.~Jansen, and T.~Ruhlman. 2014. Evolutionary and biotechnology implications
  of plastid genome variation in the inverted-repeat-lacking clade of legumes.
  Plant Biotechnology Journal 12:743--754.
\bibAnnoteFile{Sabir14}

\bibitem[{Salichos and Rokas(2013)}]{sal13}
Salichos, L. and A.~Rokas. 2013. Inferring ancient divergences requires genes
  with strong phylogenetic signals. Nature 497:327–--331.
\bibAnnoteFile{sal13}

\bibitem[{Salter(2001)}]{sal01}
Salter, L.~A. 2001. Complexity of the likelihood surface for a large {DNA}
  dataset. Syst. Biol. 50:970--978.
\bibAnnoteFile{sal01}

\bibitem[{Sanderson(2007)}]{san07}
Sanderson, M.~J. 2007. Construction and annotation of large phylogenetic trees.
  Aust. Syst. Bot. 20:287--301.
\bibAnnoteFile{san07}

\bibitem[{Sanderson(2008)}]{san08}
Sanderson, M.~J. 2008. Phylogenetic signal in the eukaryotic tree of life.
  Science 321:121--123.
\bibAnnoteFile{san08}

\bibitem[{Sanderson et~al.(2010)Sanderson, McMahon, and Steel}]{san10}
Sanderson, M.~J., M.~M. McMahon, and M.~Steel. 2010. Phylogenomics with
  incomplete taxon coverage: the limits to inference. BMC Evol. Biol. 10.
\bibAnnoteFile{san10}

\bibitem[{Sanderson et~al.(2011)Sanderson, McMahon, and Steel}]{san11}
Sanderson, M.~J., M.~M. McMahon, and M.~Steel. 2011. Terraces in phylogenetic
  tree space. Science 333:448--450.
\bibAnnoteFile{san11}

\bibitem[{Semple(2003)}]{sem03a}
Semple, C. 2003. Reconstructing minimal rooted trees. Discr. Appl. Math.
  127:489--503.
\bibAnnoteFile{sem03a}

\bibitem[{Semple and Steel(2003)}]{sem03b}
Semple, C. and M.~Steel. 2003. Phylogenetics. Oxford University Press, New
  York.
\bibAnnoteFile{sem03b}

\bibitem[{Serfling(1980)}]{ser80}
Serfling, R.~J. 1980. Approximation theorems of mathematical statistics. Wiley
  Series in Probability and Mathematical Statistics.
\bibAnnoteFile{ser80}

\bibitem[{Simmons(2012{\natexlab{a}})}]{Sim12a}
Simmons, M.~P. 2012{\natexlab{a}}. Misleading results of likelihood-based
  phylogenetic analyses in the presence of missing data. Cladistics
  28:208--222.
\bibAnnoteFile{Sim12a}

\bibitem[{Simmons(2012{\natexlab{b}})}]{Sim12b}
Simmons, M.~P. 2012{\natexlab{b}}. Radical instability and spurious branch
  support by likelihood when applied to matrices with non-random distributions
  of missing data. Molecular Phylogenetics and Evolution 62:472--484.
\bibAnnoteFile{Sim12b}

\bibitem[{Simmons(2014)}]{Sim14a}
Simmons, M.~P. 2014. Limitations of locally sampled characters in phylogenetic
  analyses of sparse supermatrices. Molecular Phylogenetics and Evolution
  74:1--14.
\bibAnnoteFile{Sim14a}

\bibitem[{Simmons and Freudenstein(2011)}]{SimFreu11}
Simmons, M.~P. and J.~V. Freudenstein. 2011. Spurious 99\% bootstrap and
  jackknife support for unsupported clades. Molecular Phylogenetics and
  Evolution 61:177--191.
\bibAnnoteFile{SimFreu11}

\bibitem[{Simmons and Goloboff(2013)}]{SimGol13}
Simmons, M.~P. and P.~A. Goloboff. 2013. An artifact caused by undersampling
  optimal trees in supermatrix analyses of locally sampled characters.
  Molecular Phylogenetics and Evolution 69:265--75.
\bibAnnoteFile{SimGol13}

\bibitem[{Simmons and Goloboff(2014)}]{Sim14b}
Simmons, M.~P. and P.~A. Goloboff. 2014. Dubious resolution and support from
  published sparse supermatrices: the importance of thorough tree searches.
  Molecular Phylogenetics and Evolution 78:334--348.
\bibAnnoteFile{Sim14b}

\bibitem[{Smith et~al.(2009)Smith, Beaulieu, and Donoghue}]{smi09}
Smith, S.~A., J.~M. Beaulieu, and M.~J. Donoghue. 2009. Mega-phylogeny approach
  for comparative biology: an alternative to supertree and supermatrix
  approaches. BMC Evol. Biol. 9.
\bibAnnoteFile{smi09}

\bibitem[{Stamatakis(2014)}]{Stam14}
Stamatakis, A. 2014. {RA}x{ML} version 8: a tool for phylogenetic analysis and
  post-analysis of large phylogenies. Bioinformatics .
\bibAnnoteFile{Stam14}

\bibitem[{Stamatakis and Alachiotis(2010)}]{StamAlAch10}
Stamatakis, A. and N.~Alachiotis. 2010. Time and memory efficient
  likelihood-based tree searches on phylogenomic alignments with missing data.
  Bioinformatics 26:i132--i139.
\bibAnnoteFile{StamAlAch10}

\bibitem[{Steel(1992)}]{ste92}
Steel, M. 1992. The complexity of reconstructing trees from qualitative
  characters and subtrees. J. Classif. 9:91--116.
\bibAnnoteFile{ste92}

\bibitem[{Steel and Sanderson(2010)}]{ste10}
Steel, M. and M.~J. Sanderson. 2010. Characterizing phylogenetically decisive
  taxon coverage. Appl. Math. Lett. 23:82--86.
\bibAnnoteFile{ste10}

\bibitem[{Susko(2009)}]{sus09}
Susko, E. 2009. Bootstrap support is not first order correct. Syst. Biol.
  58:211--233.
\bibAnnoteFile{sus09}

\bibitem[{Swofford(1999)}]{paup}
Swofford, D.~L. 1999. {PAUP} * 4.0. {P}hylogenetic anaysis using parsimony and
  other methods.
\bibAnnoteFile{paup}

\bibitem[{Wiens(2011)}]{wie11a}
Wiens, J.~J. 2011. Re-evolution of lost mandibular teeth in frogs after more
  than 200 million years, and re-evaluating {D}ollo's law. Evolution
  65:1283--1296.
\bibAnnoteFile{wie11a}

\bibitem[{Wiens and Morrill(2011)}]{wie11b}
Wiens, J.~J. and M.~C. Morrill. 2011. Missing data in phylogenetic analysis:
  {R}econciling results from simulations and empirical data. Syst. Biol.
  60:719--731.
\bibAnnoteFile{wie11b}

\bibitem[{Xi et~al.(2013)Xi, Rest, and Davis}]{xiz13}
Xi, Z.~X., J.~S. Rest, and C.~C. Davis. 2013. Phylogenomics and coalescent
  analyses resolve extant seed plant relationships. PLOS One 8:e80870.
\bibAnnoteFile{xiz13}

\bibitem[{Yoder et~al.(2013)Yoder, Briskine, Mudge, Farmer, Paape, Steele,
  Weiblen, Bharti, Zhou, May, Young, and Tiffin}]{Yoder13}
Yoder, J.~B., R.~Briskine, J.~Mudge, A.~Farmer, T.~Paape, K.~Steele, G.~D.
  Weiblen, A.~K. Bharti, P.~Zhou, G.~D. May, N.~D. Young, and P.~Tiffin. 2013.
  Phylogenetic signal variation in the genomes of \textit{{M}edicago}
  ({F}abaceae). Systematic Biology 62:424--438.
\bibAnnoteFile{Yoder13}

\bibitem[{Yu et~al.(2014)Yu, He, and Yau}]{Yu14}
Yu, C., R.~L. He, and S.~S.~T. Yau. 2014. Viral genome phylogeny based on
  {L}empel-{Z}iv complexity and {H}ausdorff distance. Journal of Theoretical
  Biology 348:12--20.
\bibAnnoteFile{Yu14}

\bibitem[{Zanne et~al.(2014)Zanne, Tank, Cornwell, Eastman, Smith, FitzJohn,
  McGlinn, O'Meara, Moles, Reich, Royer, Soltis, Stevens, Westoby, Wright,
  Aarssen, Bertin, Calaminus, Govaerts, Hemmings, Leishman, Oleksyn, Soltis,
  Swenson, Warman, and Beaulieu}]{zan14}
Zanne, A.~E., D.~C. Tank, W.~K. Cornwell, J.~M. Eastman, S.~A. Smith, R.~G.
  FitzJohn, D.~J. McGlinn, B.~C. O'Meara, A.~T. Moles, P.~B. Reich, D.~L.
  Royer, D.~E. Soltis, P.~F. Stevens, M.~Westoby, I.~J. Wright, L.~Aarssen,
  R.~I. Bertin, A.~Calaminus, R.~Govaerts, F.~Hemmings, M.~R. Leishman,
  J.~Oleksyn, P.~S. Soltis, N.~G. Swenson, L.~Warman, and J.~M. Beaulieu. 2014.
  Three keys to the radiation of angiosperms into freezing environments. Nature
  506:89--92.
\bibAnnoteFile{zan14}

\bibitem[{Zwickl(2006)}]{Zwickl06}
Zwickl, D.~J. 2006. Genetic algorithm approaches for the phylogenetic analysis
  of large biological sequence datasets under the maximum likelihood criterion.
  Thesis School of Biological Sciences, University of Texas at Austin.
\bibAnnoteFile{Zwickl06}

\bibitem[{Zwickl et~al.(2014)Zwickl, Wing, Stein, Ware, and Sanderson}]{zwi14}
Zwickl, D.~J., R.~Wing, J.~Stein, D.~Ware, and M.~J. Sanderson. 2014.
  Disentangling methodological and biological sources of gene tree discordance
  on \textit{{O}ryza} ({P}oaceae) chromosome 3. Syst. Biol. 63:645--659.
\bibAnnoteFile{zwi14}

\end{thebibliography}

\newpage

\section{Appendix: Mathematical proof of Proposition~\ref{thm1} and Theorem~\ref{thm2}}

\subsection{Proof of Proposition~\ref{thm1}}

Consider an analysis where for each locus $i$ we are free to select $E_i$-parameters, but the $M_i$ parameters are constrained to be identical (i.e. $M_i = M$ for all $i$).
Let $\varphi(T)$ denote the log-likelihood of tree $T$ (having leaf set $\cL$) for the data $D$.  
Then assuming the loci evolve independently (conditional on the parameter choices) we have:
$$\varphi(T) = \sup_{(M, (E_i))} \sum_{i=1}^k \log \PP(D_i| T, M, (E_i)),$$
where  $k$ is the number of loci, `sup' refers to supremum (i.e. maximum if it is attained, else its limiting value) as we search over $M$ and the $E_i$ parameter spaces,  and where 
$(E_i)$ is short for $(E_1, \ldots, E_k)$. 
Now, $$\PP(D_i| T, M, (E_i)) = \PP(D_i|(T|\cL_i), M, E_i),$$
(notice that $T$ and $(E_i)$ on the left has been replaced by $(T|\cL_i)$ and $E_i$ on the right).
Combining these last two equations gives:
$$\varphi(T) = \sup_{(M, (E_i))} \sum_{i=1}^k \log \PP(D_i|(T|\cL_i), M, E_i),$$
and so 
\begin{equation}
\label{var}
\varphi(T) = \sup_{M} \sum_{i=1}^k \sup_{E_i} \log \PP(D_i|(T|\cL_i), M, E_i).
\end{equation}
Now, suppose that $T$ is a phylogenetic tree on the entire  leaf set $\cL$ and that $T$
maximises $\varphi(*)$. Let $T'$ be any other phylogenetic tree on leaf set $\cL$ for which $T|\cL_i = T'|\cL_i$ (i.e. $T'$ lies in the same terrace at $T$).
Then from Eqn.~(\ref{var}) we have:
$$\varphi(T) =\sup_{M} \sum_{i=1}^k \sup_{E_i} \log \PP(D_i|(T|\cL_i), M, E_i) =$$
$$ \sup_{M} \sum_{i=1}^k \sup_{E_i} \log \PP(D_i|(T'|\cL_i), M, E_i) = \varphi(T'),$$
so $T'$ is an ML tree also.
In other words, all trees on the same terrace as $T$ are ML trees. This completes the proof.
\hfill$\Box$

\subsection{Proof of Theorem~\ref{thm2}}

We first begin by defining more formally some of the notions mentioned earlier.

\begin{itemize}
\item
Given any binary phylogenetic $X$ tree $T'$, and any  taxon $x$ from $X$, consider the tree $T'_{-x}$ that is obtained from $T'$ by deleting leaf taxon $x$ and its incident edge $e(x)$.  Note that each edge of $T'_{-x}$ corresponds to an edge of $T'$, except for the edge $e_{-x}$ of $T'_{-x}$ which corresponds to the two edges $(e_1, e_2)$ of $T'$, that are incident with $e(x)$ in $T'$.  In this way, if $T'$ comes equipped with an branch length assignment $l$ (so $l(e)$ is the length of edge $e$), then the induced branch length $l_{-x}$ function for $T'_{-x}$ is thus given by:
$$l_{-x}(e) = 
\begin{cases}
l(e), & \mbox{ if } e\neq e_{-x}; \\
l(e_1)+l'(e_2), & \mbox{ if } e=e_{-x}.
\end{cases}
$$

\item For any data set $D$ that consists of a sequence of $k$ aligned site patterns on $X$, and any taxon $x \in X$, let $D_{-x}$ denote sequence of $k$ aligned site patterns on $X-\{x\}$ obtained by deleting the sequence for $x$.

\item Now suppose that the sequence sites in $D$ have evolved i.i.d. on some fixed binary phylogenetic $X-$tree $T$ with branch length assignment $\lambda$,
under a reversible Markovian process.    Thus the sites in $D_{-x}$ evolve i.i.d. on $T_{-x}$ with branch length assignment $\lambda_{-x}$.

\end{itemize}

We wish to apply a Bayesian approach to compare different placements of the taxon $x$ into $T_{-x}$ given the censored data $D_{-x}$.  We assume a prior probability distribution on the set of binary phylogenetic $X$-trees with branch lengths for which:
\begin{itemize}
\item[(i)] each binary tree has the same probability (i.e. the `PDA distribution');
\item[(ii)] edge lengths are independent exponential random variables.
\end{itemize}
Without loss of generality (by rescaling) we may assume that the mean of the exponential distribution in (ii) is 1.

Suppose we have two binary phylogenetic $X$-trees $T'$ and $T''$ that satisfy $T'_{-x} = T''_{-x} = T_{-x}$ (i.e. two different placements of leaf $x$ in $T_{-x}$).  One (or neither) of these trees
might be $T$.  We are interested in the ratio of posterior probabilities $\frac{\PP(T'|D_{-x})}{\PP(T''|D_{-x})}$. The following result states that for long sequences this ratio converges towards the ratio of the lengths of the two edges of $T_{-x}$ to which the missing taxon ($x$) is attached. To establish Theorem 4, we prove the following result. 

\begin{theorem}
\label{thm2a}
For data generated by a reversible Markov process on a phylogenetic $X$-tree $T$ with branch length assignment  $\lambda$, consider, for any $x \in X$, any two phylogenetic $X$-trees $T'$ and $T''$ obtained by attaching $x$ to edges of $T_{-x}$ of length $l'$ and $l''$ respectively. Then the ratio $\frac{\PP(T'|D_{-x})}{\PP(T''|D_{-x})}$ converges in probability to $l'/l''$,
as  the sequence length $k$ grows. 
\end{theorem}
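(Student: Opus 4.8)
\emph{Proof plan for Theorem~\ref{thm2a}.} The idea is to collapse the posterior comparison of the two \emph{topologies} $T'$ and $T''$ into a comparison of the posterior \emph{expectations of two edge lengths} on the common backbone tree $T_{-x}$, and then to apply consistency of Bayesian branch-length estimation on a fixed topology.

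First I would reduce to a ratio of marginal likelihoods. Because condition~(i) makes the prior on topologies uniform, $\PP(T'\mid D_{-x})/\PP(T''\mid D_{-x})$ equals $m_k(T')/m_k(T'')$, where $m_k(S):=\int \PP(D_{-x}\mid S,\ell)\,\pi_S(d\ell)$ and $\pi_S$ is the product of independent $\mathrm{Exp}(1)$ priors on the edges of $S$. Next I would simplify $m_k(T')$. In $T'$ the taxon $x$ contributes no observed states to $D_{-x}$, so summing its pendant transition matrix over the unobserved state at $x$ gives a factor $1$: the pendant edge length drops out and integrates against its prior to $1$. Writing $e_1',e_2'$ for the two sub-edges into which attaching $x$ subdivides the edge $e'$ of $T_{-x}$, the multiplicativity of transition matrices along a path (the semigroup property) shows that $\PP(D_{-x}\mid T',\ell)$ depends on $\ell(e_1'),\ell(e_2')$ only through $s:=\ell(e_1')+\ell(e_2')$ and equals $\PP(D_{-x}\mid T_{-x},\mu)$, where $\mu$ agrees with $\ell$ on edges $\neq e'$ and $\mu(e')=s$. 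Since $\ell(e_1'),\ell(e_2')$ are i.i.d.\ $\mathrm{Exp}(1)$, their sum $s$ has the $\mathrm{Gamma}(2,1)$ density $s\mapsto s\,e^{-s}$; changing variables accordingly yields
\[
m_k(T') \;=\; Z_k\cdot \EE_{\rho_k}\!\big[\mu(e')\big],
\]
where $\rho_k(d\mu)\propto \PP(D_{-x}\mid T_{-x},\mu)\,\prod_{e}e^{-\mu(e)}\,d\mu$ is the Bayesian posterior over branch-length assignments of $T_{-x}$, and $Z_k=\int \PP(D_{-x}\mid T_{-x},\mu)\prod_e e^{-\mu(e)}\,d\mu$ is a normalizing constant not depending on where $x$ attaches. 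The same computation gives $m_k(T'')=Z_k\,\EE_{\rho_k}[\mu(e'')]$, hence
\[
\frac{\PP(T'\mid D_{-x})}{\PP(T''\mid D_{-x})} \;=\; \frac{\EE_{\rho_k}[\mu(e')]}{\EE_{\rho_k}[\mu(e'')]}.
\]

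Second, I would apply posterior consistency. The censored data $D_{-x}$ are i.i.d.\ sites generated on the fixed topology $T_{-x}$ with strictly positive branch lengths $\lambda_{-x}$ (recall $l'=\lambda_{-x}(e')$ and $l''=\lambda_{-x}(e'')$). Under a standard reversible model the branch lengths are identifiable on a fixed topology, so $\rho_k$ concentrates at $\lambda_{-x}$; granting enough control to pass from concentration to convergence of the means, $\EE_{\rho_k}[\mu(e')]\to l'$ and $\EE_{\rho_k}[\mu(e'')]\to l''$ in probability. As $l''>0$, the displayed ratio converges in probability to $l'/l''$, proving Theorem~\ref{thm2a}. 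Theorem~\ref{thm2} then follows: the trees obtained by attaching $x$ to the various edges of $T_{-x}$ are exactly the $X$-trees $T^*$ with $T^*_{-x}=T_{-x}$; their total posterior mass tends to $1$ by consistency of Bayesian topology estimation on the leaf set $\cL\setminus\{x\}$; and since their posterior probabilities are in the ratios $\lambda_{-x}(e):\lambda_{-x}(f)$, each converges to $l(e_x)/\sum_e l(e)$ with the sum over edges of $T_{-x}$.

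The step I expect to be the main obstacle is the upgrade, in the second part, from ``$\rho_k$ concentrates at $\lambda_{-x}$'' to ``$\EE_{\rho_k}[\mu(e)]\to\lambda_{-x}(e)$''. Concentration in probability does not by itself control expectations; one needs uniform integrability of $\mu(e)$ under $\rho_k$, and although the $\mathrm{Exp}(1)$ prior has light tails, the normalizing constant $Z_k$ decays exponentially in $k$, so a naive tail bound is insufficient. The clean route is a Laplace-type (equivalently Bernstein--von Mises) analysis of the integrals $\int \mu(e)\,\PP(D_{-x}\mid T_{-x},\mu)\,\pi(d\mu)$: one shows that, with high probability, the per-site log-likelihood is strictly smaller at branch-length vectors with a large coordinate than near the MLE, so each integral is dominated by a shrinking neighbourhood of $\widehat{\lambda}_k\to\lambda_{-x}$. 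The requisite smoothness and a positive-definite Fisher information hold for reversible models on a fixed binary topology with strictly positive branch lengths --- which is precisely where reversibility (and positivity of the generating branch lengths) enters the argument, the collapsing in the first part being only the semigroup property.
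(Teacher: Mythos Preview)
Your approach is correct and, at the level of the underlying computation, coincides with the paper's: both reduce to the marginal-likelihood ratio, both exploit that attaching $x$ to an edge $e$ of $T_{-x}$ replaces the $\mathrm{Exp}(1)$ prior on that edge by a $\mathrm{Gamma}(2,1)$ prior (i.e.\ inserts a factor $\mu(e)$), and both then argue that the integrals concentrate at $\lambda_{-x}$ so that the ratio tends to $l'/l''$. Your repackaging of the identity as $m_k(T')=Z_k\,\EE_{\rho_k}[\mu(e')]$ is a cleaner way to say what the paper writes as $\int \exp(-k\,d_{KL})\,f'(l)\,dl$ with $f'(l)=l(e')\prod_e e^{-l(e)}$; the two are literally the same integral.

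The obstacle you flag---upgrading concentration of $\rho_k$ to convergence of $\EE_{\rho_k}[\mu(e)]$---is exactly where the paper does its technical work, and it proceeds precisely by the Laplace-type localisation you anticipate. Concretely, the paper takes $B_k=\{l:\|l-\lambda_{-x}\|_\infty\le k^{-1/4}\}$ and shows that the contribution of $\Gamma\setminus B_k$ to either integral is negligible relative to that of $B_k$. For the outside: Pinsker's inequality gives $d_{KL}(\hat s\|p(l))\ge \tfrac12\|p(\lambda_{-x})-p(l)\|_1^2$ up to a vanishing term, and a result of Moulton and Steel (that branch lengths are determined up to a constant by leaf-to-leaf path lengths, hence by pairwise ``same-state'' probabilities in a reversible model) converts $\|l-\lambda_{-x}\|_\infty\ge k^{-1/4}$ into $\|p(\lambda_{-x})-p(l)\|_1\ge ck^{-1/4}$; this makes the outside integrand $\le \exp(-d\,k^{1/2})$. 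For the inside: Wilks' theorem controls $k\,d_{KL}(\hat s\|p(l))$ near $\lambda_{-x}$, giving a lower bound of polynomial order $k^{-E}$ on the inside integral. The ratio of outside to inside therefore tends to $0$, and since $\mu(e)$ is bounded on $B_k$, the posterior expectation converges. This is your uniform-integrability argument made explicit; the Moulton--Steel identifiability step is the one ingredient you did not name.
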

Notice that Theorem~\ref{thm2a} implies Theorem~\ref{thm2} since the statistical consistency of Bayesian phylogenetics under identifiable models implies that  any tree which is different from $T_{-x}$ when $x$ is deleted has a posterior probability that converges to zero as the sequences length $k$ grows (thus $\PP(T''|D_{-x})$ sums to 1 as we sum over all binary trees $T''$ that agree with $T$ up to the placement of $x$).
Thus, the the remainder of our argument is tailored towards proving Theorem~\ref{thm2a} under the same conditions stated for Theorem~\ref{thm2} (in particular, conditions (i) and (ii) in the preamble to that theorem). 

From Bayes' identity we have:
\begin{equation}
\label{eq1}
\PP(T'|D_{-x}) = \frac{\PP(D_{-x}|T')\PP(T')}{\PP(D_{-x})}.
\end{equation}

Now, $\PP(T)=\PP(T')$ (by assumption (i)), and so, from Eqn. (\ref{eq1}) and the analogous identity for $\PP(T|D_{-x})$:
\begin{equation}
\label{eq2}
\frac{\PP(T'|D_{-x})}{\PP(T|D_{-x})} = \frac{\PP(D_{-x}|T')}{\PP(D_{-x}|T)}.
\end{equation}

Moreover,  
\begin{equation}
\label{eq3}
\PP(D_{-x}|T) = \int_\Gamma \PP(D_{-x}|T_{-x}, l) f(l) dl
 \end{equation}
 where $l$ is the set of branch length assignment on $T_{-x}$, and where $f(l)$ refers to the density of the branch lengths on $T_{-x}$ that is induced by independent exponential prior branch lengths on $T$
 ($\Gamma$ is the set of possible branch lengths of $T_{-x}$). 
 
 Similarly,
 \begin{equation}
\label{eq4}
\PP(D_{-x}|T') = \int_\Gamma \PP(D_{-x}|T'_{-x}, l) f'(l) dl
 \end{equation}
  where $l$ is the branch length assignment on $T'_{-x} (=T_{-x})$, and where $f'(l)$ refers to the density of the branch lengths on $T_{-x}$ induced by the independent exponential priors on $T'$.

 Let $\hat{s}$ denote the empirical frequency distribution of site patterns on $X-x$, and let $p(l)$ denote the 
vector of site pattern probabilities generated by $T_{-x}$ with branch lengths $l$.
We have the identity:
 \begin{equation}
\label{eq5}
\PP(D_{-x}|T_{-x}, l)/\prod_i \hat{s}_i^{\hat{s_i}k} = \exp(-k d_{KL}(\hat{s} ||p(l)),
\end{equation}
where $i$ ranges over all cite patterns, and where $d_{KL}(P||Q)=\sum_i P_i\log(P_i/Q_i)$ refers to Kullback-Leibler separation of probability distributions $P$ and $Q$.
Similarly, 
 \begin{equation}
\label{eq5a}
\PP(D_{-x}|T'_{-x}, l)/\prod_i \hat{s}_i^{\hat{s_i}k} = \exp(-k d_{KL}(\hat{s} ||p( l)).
\end{equation}

Combining Eqns. (\ref{eq2}), (\ref{eq3}), (\ref{eq4}), (\ref{eq5}) and (\ref{eq5a}) we obtain:

 \begin{equation}
\label{eq6}
\frac{\PP(T'|D_{-x})}{\PP(T|D_{-x})} = \frac{\int_\Gamma \exp(-k d_{KL}(\hat{s} ||p(l)) f'(l)dl}{\int_\Gamma \exp(-k d_{KL}(\hat{s} ||p(l)) f(l)dl}.
\end{equation}

Now,  let $B_k$ denote the subspace of the branch length space $\Gamma$ of $T_{-x}$ that is within ($l_\infty$) distance   $k^{-1/4}$ of $\lambda_{-x}$. 
Then  for $g=f$ or $g=f'$ we have the following convergence in probability as $k$ grows:
\begin{equation}
\label{limit4}
R_k:=\frac{\int_{\Gamma-B_k}\exp(-kd(\hat{s}||p(l))g(l)dl}{\int_{B_k}\exp(-kd(\hat{s}||p(l)) g(l)dl} \xrightarrow{p} 0.
\end{equation}
The proof of this last equation is given in a separate subsection below.  We apply it as follows. Notice that for  $g=f$ or $g=f'$, we have:
$$\int_{\Gamma}\exp(-kd(\hat{s}||p(l))g(l)dl = \int_{B_k}\exp(-kd(\hat{s}||p(l))g(l)dl + \int_{\Gamma-B_k}\exp(-kd(\hat{s}||p(l))g(l)dl,$$
and so
\begin{equation}
\label{lim1}
\int_{\Gamma}\exp(-kd(\hat{s}||p(l))g(l)dl = (1+ R_k)\int_{B_k}\exp(-kd(\hat{s}||p(l))g(l)dl .
\end{equation}
Moreover, since $g$ is continuous,  and the nested sequence of sets $B_k$ convergences on the vector $\lambda_{x}$ as $k \rightarrow \infty$ we have:
\begin{equation}
\label{lim2}
\frac{\int_{B_k}\exp(-kd(\hat{s}||p(l))g(l)dl }{\int_{B_k}\exp(-kd(\hat{s}||p(l))dl}  \xrightarrow{p}  g(\lambda_{-x}),
\end{equation}
as $k$ grows.
Thus, combining Eqns. (\ref{eq6}), (\ref{lim1}) and (\ref{lim2}) we obtain:
\begin{equation}
\label{eq6a}
\frac{\PP(T'|D_{-x})}{\PP(T|D_{-x})}    \xrightarrow{p} \frac{f'(\lambda_{-x})}{f(\lambda_{-x})}.
\end{equation}

Now, by assumption (ii),  the branch lengths in $T_{-x}$ are independent exponentials (of mean 1)  for all edges other than $e_{-x}$, and for this edge the branch length is the sum of two independent exponential(s) of mean 1, which has a gamma distribution  with density $f(t) =t\exp(-t)$. Thus 
 \begin{equation}
\label{eq7}
f(l) = \prod_{e'\neq e_{-x}} \exp(-l(e))\cdot [ l(e_{-x})\exp(-l(e_{-x}))] = l(e_{-x}) \prod_{e} \exp(-l(e)),
\end{equation}
where the last product term is over all edges of $T_{-x}$.
Similarly, if $e'_{-x}$ is the edge of $T_{-x} (=T'_{-x})$ that $x$ is attached to in $T'$ then 
 \begin{equation}
\label{eq8}
f'(l) = l(e'_{-x}) \prod_{e} \exp(-l(e)).
\end{equation}
From Eqns. (\ref{eq7}) and (\ref{eq8}) we have:
$$\frac{f'(\lambda_{-x})}{f(\lambda_{-x})} = \frac{\lambda_{-x}(e'_{-x})}{\lambda_{-x}(e_{-x})},$$
which, from (\ref{eq6a}), implies that for $T'$ and $T''$ with $T'_{-x} = T''_{-x} = T_{-x}$, we have:
$$\frac{\PP(T'|D_{-x})}{\PP(T''|D_{-x})}    \xrightarrow{p} \frac{l'}{l''},$$ where 
$l'=\lambda_{-x}(e'_{-x})$ and $l''=\lambda_{-x}(e''_{-x})$, and where  $e'_{-x}$ and $e''_{-x}$ are the corresponding edges of $T_{-x}$ that $x$ attaches to in $T'$ and $T''$ respectively. 
This completes the proof of Theorem~\ref{thm2a} and thereby Theorem~\ref{thm2}, modulo the remaining step of establishing Eqn.~(\ref{limit4}) which we attend to below. 
\hfill$\Box$

\subsection{Proof of Eqn.~(\ref{limit4})}

A classic result (e.g. Wilk's theorem) ensures that the following convergence in distribution holds: 
$$
2k d(\hat{s}||p(\lambda_{-x})) \xrightarrow{D} \chi^2_{N-1}
$$
where $\chi^2_{N-1}$ is a chi-square distribution with $N-1$ degrees of freedom (here $N$ is the number of possible site patterns).
By the continuous mapping theorem, it now follows that:
\begin{equation}
\label{limit}
\exp(-k d(\hat{s}||p(\lambda_{-x}))) \xrightarrow{D} W
\end{equation}
where $W= \exp(-\chi^2_{N-1})$ is a continuous and non-negative random variable. 

Moreover, if a sequences of branch length vectors $l_k$ lies within ($l_\infty$) distance $\frac{1}{k}$ of $\lambda_{-x}$ then
we also have:
\begin{equation}
\label{limit2}
\exp(-k d(\hat{s}||p(l_k))) \xrightarrow{D} W.
\end{equation}
(For further details see \citet{ser80}, esp. Section 3.5).

Next, Pinsker's inequality (see \citet{cov06}) gives for any $l \in \Gamma$:
$$d(\hat{s}||p(l)) \geq \frac{1}{2}||\hat{s} - p(l)||_1^2,$$ where $\|\cdot||_1$ refers to the $l_1$ metric.
The triangle inequality for this metric then gives:
\begin{equation}
\label{h1}
d(\hat{s}||p(l))\geq \frac{1}{2}(\|p(\lambda_{-x}) -p(l)\|_1 - \|\hat{s} - p(\lambda_{-x})\|_1)^2.
\end{equation}
Now, for any $l \in \Gamma-B_k$,  we have $\|l - \lambda_{-x}\|_\infty \geq k^{-1/4}$, and so, by Theorem 2.1(2) of 
\cite{mou99} there exists a pair of leaves $i,j$ so that the difference in path length between these leaves under branch lengths $l$ and $\lambda_{-x}$ is at least
$\frac{1}{2} k^{-1/4}$.  Since the site substitution model is reversible, the probability two leaves are in the same state is a monotone decreasing function of the path length between
them (a positive mixture of exponential functions).  This in turn implies that the event that leaves $i$ and $j$ are in the same state 
differs in probability under the branch lengths $l$ and $\lambda_{-x}$ by an amount that is at least $k^{-1/4}$ times some constant (dependent on the model, and $\lambda_{-x}$). 
In particular,
\begin{equation}
\label{h2}
\|p(\lambda_{-x}) -p(l)\|_1 \geq c k^{-1/4}, \mbox{  for some constant } c>0.
\end{equation}
Also, 
\begin{equation}
\label{h3}
\|\hat{s} - p(\lambda_{-x})\| \leq k^{-1/3}, 
\end{equation} with probability converging to $1$ as $k$ grows. Consequently, by combining Eqns. (\ref{h1}), (\ref{h2}) and (\ref{h3}), 
the following inequality
holds for all $l \in \Gamma-B_k$ with probability converging to $1$ as $k$ grows:
$$\frac{\exp(-kd(\hat{s}||p(l)))}{\exp(-dk^{1/2})} \leq 1,$$
for some constant $d>0$. 
Thus, with probability converging to 1 as $k$ grows: 
\begin{equation}
\label{fun}
\frac{\int_{\Gamma-B_k}\exp(-kd(\hat{s}||p(l))g(l)dl}{\exp(-dk^{1/2})} \leq  \lim_{k\rightarrow \infty} \int_{\Gamma-B_k} 1 \cdot g(l)dl =1.
\end{equation}
On the other hand, if we let $B^*_k \subset B_k$ be the set of branch length vectors that lie within ($l_\infty$) distance at most $1/k$ from $\lambda_{-x}$ then 
\begin{equation}
\label{fun2}
\frac{\int_{B^*_k}\exp(-kd(\hat{s}||p(l))g(l)dl}{\int_{B^*_k} g(l)dl} \geq \sup_{l \in B^*_k} \{ \exp(-kd(\hat{s}||p(l))\}.
\end{equation}

Now, for a value $C>0$ that is independent to $k$ (but dependent on $\lambda_{-x}$) we have $\int_{B^*_k} g(l)dl \geq C k^{-E}$, where $E$ is the number of edges
of $T$.
Also,  from Eqn.~(\ref{limit2}) 
$ \sup_{l \in B^*_k} \{ \exp(-kd(\hat{s}||p(l))\}$ converges in distribution to the random variable $W= \exp(-\chi^2_{N-1})$  as $k$ grows. 
 Thus, from (\ref{fun2}),   the following inequality holds with probability converging to 1 as $k$ grows:
 $ \int_{B^*_k}\exp(-kd(\hat{s}||p(l))g(l)dl/ C k^{-E} \geq W,$
 and thus, 
\begin{equation}
\label{fun3}
\frac{ \int_{B_k}\exp(-kd(\hat{s}||p(l))g(l)dl}{ k^{-E}} \geq W,
\end{equation}
since $B^*_k \subset B_k$ and the integrand is non-negative.
Combining Eqns.~(\ref{fun})  and (\ref{fun3}) the following inequality holds with probability converging to 1 with increasing $k$:
$$
R_k \leq \frac{\exp(-dk^{1/2})}{C k^{ -E}}\cdot \frac{1}{W}, 
$$
Notice that the second term on the right ($\frac{1}{W}$) is a continuous random variable, but since
$\PP(W>0)=1$ and since the first term on the right converges to zero (absolutely) as $k$ tends to infinity, this suffices to establish Eqn.~(\ref{limit4}).

\end{document}